
\documentclass{amsart}
\usepackage{amssymb}
\usepackage{amsfonts}
\usepackage{geometry}

\setcounter{MaxMatrixCols}{10}

\newtheorem{theorem}{Theorem}
\theoremstyle{plain}

\newtheorem{claim}{Claim}

\newtheorem{remark}{Remark}

\numberwithin{equation}{section}
\geometry{left=3.0cm,right=3.0cm,top=3.0cm,bottom=3.0cm}

\input{tcilatex}

\begin{document}
\date{August 17, 2022}
\title[Strategy-proof aggregation in median semilattices]{Strategy-proof
aggregation rules in median semilattices with applications to preference
aggregation}
\author{Ernesto Savaglio}
\address{DEA, University of Chieti-Pescara, Italy}
\author{Stefano Vannucci}
\address{DEPS, University of Siena, Italy}
\subjclass[2000]{Primary 05C05; Secondary 52021, 52037}
\keywords{Strategy-proofness, single peakedness, median join-semilattice,
social welfare function.}

\begin{abstract}
Two characterizations of the whole class of strategy-proof aggregation rules
on rich domains of locally unimodal preorders in finite median
join-semilattices are provided. In particular, it is shown that such a class
consists precisely of generalized weak sponsorship rules induced by certain
families of order filters of the coalition poset. It follows that the
co-majority rule and many other inclusive aggregation rules belong to that
class. The co-majority rule for an odd number of agents is characterized and
shown to be equivalent to a Condorcet-Kemeny median rule. Applications to
preference aggregation rules including Arrowian social welfare functions are
also considered. The existence of strategy-proof anonymous, weakly neutral
and unanimity-respecting social welfare functions which are defined on 
\textit{arbitrary} profiles of total preorders and satisfy a suitably
relaxed independence condition is shown to follow from our characterizations.

\textit{JEL\ Classification: }D71
\end{abstract}

\maketitle

\section{Introduction}

Aggregation rules are procedures to choose outcomes by taking
outcome-profiles as inputs. They are often encountered in the collective
decision literature under the labels of `voting schemes', `voting rules',
`social choice rules', `aggregators', `consensus functions', etc. The
outcomes to be aggregated usually are alternative items in some relevant
outcome space (e.g. scores, grades, signals, preferences, judgements, nodes
of an abstract network, etc.) which are submitted by the agents. Since
agents are stakeholders endowed with nonverifiable `preferential attitudes'
on the different items of the outcome space, then a \textit{reliable} and 
\emph{effective} decision protocol should be \emph{strategy-proof}, i.e.
immune to advantageous individual manipulations through submission of 
\textit{false} information. The celebrated Gibbard-Satterthwaite theorem
implies that if the domain of admissible individual preferences includes
every possible linear order on the outcome space and the range of a
strategy-proof aggregation rule consists of at least three alternatives,
then that aggregation rule must be a \emph{dictatorial rule}, certainly a
scarcely appealing protocol. Nevertheless, there exists a much wider and
interesting class of strategy-proof aggregation rules whenever \emph{rich},
i.e. suitably large, domains of certain \emph{single peaked} preference
preorders are considered. Single peaked preferences are preorders with a
unique maximum (the top element) which naturally arise if each agent's
representation of the outcome space is endowed with a plausible notion of
compromises (\textit{betweenness relation}), meaning that the outcome lying
between the top and another outcome cannot be worse then the latter. The
current literature has identified the outcome spaces (domains), on which
single peaked total preorders are defined, that allow strategy-proof rules.
We study here the case in which single peaked preorders are defined on
outcome spaces endowed with a (join-)\emph{semilatticial structure}.%
\footnote{%
We recall that a partial order over a set is a reflexive, antisymmetric and
transitive binary relation and that a partial order is a (join-)semilattice
if every pair of elements in the set has a least upper bound, named the join
or supremum of the pair, with respect to the underlying order relation.}

Thus, the present work is devoted to characterizing those \textit{%
aggregation rules in finite median\footnote{%
A join-semilattice is \textit{median }if the ternary partial operation $\mu $
such that $\mu (x,y,z)=(x\vee y)\wedge (y\vee z)\wedge (x\vee z)$ is a
well-defined operation (see Section 2 for more details).} join-semilattices}
which are \textit{strategy-proof }on some \textit{rich domains} of locally
unimodal (or \textit{single peaked}) total preorders on their outcomes.

We also characterize the \emph{co-majority} (or median) \emph{rule} as the
only one, within the class of such strategy-proof rules, that is \textit{%
anonymous} and \textit{bi-idempotent}\footnote{%
Namely, a rule selecting an outcome between the two proposals advanced by a
perfectly polarized population.} when the number of agents is odd.
Applications of our characterization results to \textit{preference}
aggregation rules are also provided, focussing on the classic case of
Arrowian \textit{social welfare functions}, namely aggregation rules taking
arbitrary profiles of total preference preorders as inputs and returning
total preference preorders as outputs.\footnote{%
It is worth emphasizing here that our usage of the term `Arrowian social
welfare function', while arguably sound and well-grounded, is by no means
widely established. Sometimes that term is also used to denote aggregation
rules for profiles of linear orders, possibly with the additional conditions
of Idempotence and the Arrowian `Independence of Irrelevant Alternatives'
requirement (see e.g. Sethuraman, Teo, Vohra (2003)).} Thus, the existence
of anonymous, unanimity-respecting and strategy-proof social welfare
functions\ on the full domain of total preorders is established. In
particular, a new characterization of \textit{generalized Condorcet-Kemeny
rules} for an odd number of agents is proved. Similar applications to
preference aggregation rules arising from other, less regular preference
structures, including generalized tournaments, reflexive relations, and
irreflexive relations are also briefly presented and discussed.

Now, addressing strategy-proofness issues for aggregation rules of any sort
requires a suitable specification of the agents' preferences on outcomes,
namely their \textit{preferences} \textit{on preferences. }It is well-known,
in view of the Gibbard-Satterthwaite `impossibility theorem', that (i) some 
\textit{domain-restriction on the foregoing `preferences on preferences'} is
required in order to open up the possibility to design interesting and
non-dictatorial strategy-proof preference-aggregation rules, and (ii)\ some
form of \textit{single-peakedness} is a most natural and plausible
domain-restriction to that effect. But single-peakedness notions typically
rely in turn on an underlying ternary \textit{betweenness relation }defined
on the `preference space' which is supposedly shared by the relevant agents
and thus should presumably be `naturally' embedded in that space. Therefore,
when it comes to the application of our basic model to preference
aggregation we are immediately confronted with a list of key issues to
address, namely:

\begin{itemize}
\item What sort of preference relations are to be aggregated? Arbitrary 
\textit{total preorders,} \textit{linear orders} or even wider domains
containing them? (Of course, the answer has a significant impact on the
general structure of the outcome space to focus on).

\item Are the preference profiles to be aggregated of an arbitrary but 
\textit{fixed finite size }(fixed population approach), or of \textit{every
possible finite size }(variable population approach)?

\item What \textit{type of aggregation protocol} are we to consider? Namely,
given a profile of preference relations of the prescribed type as input,
what kind of object is the output required to be? A single preference
relation of the prescribed type (\textit{aggregation} with no qualifier,
namely \textit{exact or pure aggregation}), one or more preference relations
of the prescribed type (\textit{multi-aggregation}), a single preference
relation belonging to a class which includes but does not reduce to the
prescribed type for the input \textit{((domain) restricted aggregation)}, or
a single preference relation of the same type as that prescribed for inputs
but enjoying some \textit{additional requirements ((codomain) constrained
aggregation)?}

\item What sort of \textit{single-peakedness} property for the relevant
`preferences on preferences' are we to focus on, or equivalently, what is
the most natural/plausible notion of \textit{betweenness} on the basic
`preference space' to refer to?
\end{itemize}

This paper relies on a definite choice of focus for each one of the
foregoing issues, namely:

(a) the basic preference domain should \textit{include all the total
preorders};

(b) the preference profiles to be aggregated are of some \textit{fixed size};

(c) the type of aggregation protocol to focus on is (pure) \textit{%
aggregation }(and possibly \textit{constrained aggregation)};

(d) the \textit{betweenness relation} to be used in order to define
single-peaked `preferences on preferences' should be \textit{the one
`naturally' dictated by the underlying basic preference domain }of the
aggregation rule under consideration.

Within such a framework, the main results established in the present work on
strategy-proof aggregation rules in median semilattices \textit{as applied
to preference aggregation }consists in:

\begin{description}
\item[$\left( i\right) $] (see Corollary 1 and Proposition 1)\textbf{\ }%
proving the existence of a large class of (`full-domain'!) social welfare
functions that are strategy-proof on the domain of all single-peaked
'preferences on preferences',\textit{\ }and characterizing them as those
aggregation rules $f$ for total preorders whose behaviour is dictated by a
family $F_{m}$ of superset-closed collections of agent-coalitions, one for
each bipartite total preorder $m$ of $A$ (namely, a preorder having just 
\textit{two indifference classes of good -respectively, bad- alternatives}),
in the following manner. At every preference profile of total preorders $%
R_{N}$, $f(R_{N})$ is just the intersection of all the bipartite total
preorders $m$ of $A$ that -at preference profile $R_{N}$- are consistent
with the preferences of \textit{all }the agents of \textit{at least one }%
coalition in $F_{m}$. Such a large class of strategy-proof social welfare
functions includes \textit{quorum systems}, \textit{fixed}-\textit{majority
collegial rules, weakly neutral rules, quota rules }and (within the latest
subclass) the \textit{co-majority (median) rule} (all of them can also be
described as \textit{weak sponsorship rules});

\item[$\left( ii\right) $] (see Proposition 2) proving that the co-majority
rule $\widehat{f}^{\partial maj}$is the only social welfare function that is 
\textit{strategy-proof} on the domain of preferences on the set of all total
preorders of a finite set of alternative social states that are
single-peaked, \textit{anonymous} (namely, invariant with respect to
agent-relabeling) and \textit{bi-idempotent }(namely, capable to choose one
of the pair of proposed preorders at any perfectly bipolarized profile).
Furthermore, under the present hypotheses the co-majority rule turns out to
be identical to the generalized Condorcet-Kemeny aggregation rule, namely a
rule that, at each profile of total preorders, selects one of the total
preorders of the finite set of alternative social states having a minimum
sum of Kemeny distances\footnote{%
Recall that the Kemeny distance between two binary relations is the size of
the symmetric difference between them (see section 4 and Appendix C for the
relevant definitions and more details).} from the preorders of the given
profile.
\end{description}

Thus, the foregoing results \textit{jointly} address \textit{several}
related issues:\ the characterization of the \textit{entire class} of
strategy-proof aggregation rules in \textit{median} finite semilattices, a
specific characterization of `median' rules amongst them, and -as a
by--product- both a general characterization of strategy-proof preference
aggregation rules including social welfare functions, and a specific
characterization of the Condorcet-Kemeny aggregation rule in a fixed
population setting.

As it turns out, such issues have been previously considered in the
literature, but most typically from mutually `disconnected' perspectives.
For instance, aggregation rules in semilattices and lattices have been
studied in depth in seminal contributions mostly due to Monjardet and his
co-workers (see e.g. Monjardet (1990)), but with no reference to
strategy-proofness issues. Condorcet-Kemeny aggregation rules have been
characterized at least for the case of \textit{linear orders} and in a 
\textit{variable population }setting (see e.g. Young, Levenglick (1978)),
but again with no reference to strategy-proofness properties. By contrast,
Bossert, Sprumont (2014) \textit{does} consider strategy-proofness issues,
and also provides characterizations of \textit{some }strategy-proof
preference aggregation rules in a variable population setting, but focusses
in fact on \textit{restricted aggregation rules }which admit \textit{%
arbitrary} total preorders as possible outputs while being \textit{only}
defined on profiles of \textit{linear orders}. Most recently, Bonifacio, Mass%
\'{o} (2020) essentially characterizes the \textit{sub-class of anonymous
and unanimity-respecting aggregation rules} in arbitrary join-semilattices
which are \textit{strategy-proof} on `single-peaked' domains of \textit{%
total preorders} (according to a notion of `single-peakedness' that reflects
the structure of the underlying join-semilattice). Since the relevant
join-semilattice may not be median, however, `median' rules such as
Condorcet-Kemeny rules are in general not available, and a \textit{specific }%
application to the case of total preorders and consequently to classical
Arrowian social welfare functions is in fact out of reach in that general
framework \textit{unless} some further structure is adjoined.

On the contrary, focussing on the case of \textit{median }join-semilattices
makes it possible and natural to address \textit{jointly }all of the
previous issues: that is precisely what is done in the present work. The
application of its main results to two prominent examples of median
join-semilattices (namely, the median semilattice of total preorders) and
the distributive lattice of reflexive binary relations on a finite set)
provides a simple way out of Arrow's `impossibility theorem' that also
ensures strategy-proofness for a quite large class of \ anonymous and
idempotent social welfare functions. It consists in combining full retention
of transitivity and a \textit{basic} version of Pareto optimality for social
preferences with a considerable \textit{relaxation of Arrow's Independence
of Irrelevant Alternatives.}

The rest of the paper is organized as follows. Section 2 is devoted to the
basic definitions and preliminaries of our model. Section 3 includes the
main results. Section 4 presents the application of the foregoing results to
preference aggregation rules. Section 5 collects some concluding remarks.
Finally, Appendix A includes the proof of the main result of the present
work, Appendix B contains some useful auxiliary notions concerning median
join-semilattices, and Appendix C provides a detailed discussion of related
literature.

\section{Notation, definitions and preliminaries}

Let $N=\left \{ 1,...,n\right \} $ denote the finite population of voters,
with $n\geq 3$ in order to avoid tedious qualifications. The subsets of $N$
are also referred to as \textit{coalitions, }and $(\mathcal{P}(N),\subseteq
) $ denotes the partially ordered set (\emph{poset}) of coalitions induced
by set-inclusion preorder on the set $\mathcal{P}(N)$ of all possible
subsets of $N$. A set $F\subseteq \mathcal{P}(N)$ of coalitions such that,
for any $S\in F$ and any $T\subseteq N$, if $S\subseteq T$ then $T\in F$ is
called \textit{order filter. }The set of inclusion-minimal
elements/coalitions of $F $ are the \textit{basis }of the order filter $F$
and it is denoted as $F^{\min }$.

Let $X$ be an arbitrary nonempty \textit{finite} set of alternatives and $%
\leqslant $ a partial order, i.e. a \emph{reflexive}, \emph{transitive} and 
\emph{antisymmetric} binary relation on $X$. Let $\mathcal{X}=(X,\leqslant )$
be the corresponding partially ordered set on $X$. We denote by $\vee $ and $%
\wedge $ the \textit{least-upper-bound} (or join) and \textit{%
greatest-lower-bound} (or meet) binary \textit{partial }operations on $X$ as
induced by $\leqslant $, respectively and by $\vee Y$ and $\wedge Y$ the
least-upper-bound and greatest-lower-bound of $Y$ (whenever they exist), for
any $Y\subseteq X$. The \textit{order filter} of a partially ordered set $%
(X,\leqslant )$ is a set $Y\subseteq X$ such that, for any $y,z\in X$, if $%
y\leqslant z$ and $y\in Y$, then $z\in Y$. For any $x\in X$, we then denote
with $\uparrow x=\left \{ y\in X:x\leqslant y\right \} $ the \textit{%
principal order filter\ }generated by $x$. An element $x\in X$ is \textit{%
meet-irreducible (join-irreducible) }if for any $Y\subseteq X$, $x=\wedge Y$
entails $x\in Y$ ($x=\vee Y$ entails $x\in Y$). Moreover, for any $%
Y\subseteq X$, $\wedge Y$ ($\vee Y$, respectively) is \textit{well-defined}
if and only if there exists $z\in X$ such that $z\leqslant y$ ($y\leqslant z$%
, respectively) for all $y\in Y$, namely the elements of $Y$ have a \textit{%
common lower (upper) bound}. The set of all meet-irreducible elements
(join-irreducible elements) of $\mathcal{X}=(X,\leqslant )$ will be denoted
by $M_{\mathcal{X}}$ ($J_{\mathcal{X}}$, respectively). Notice that, by
construction, for every $x\in X$, $x=\wedge M(x)$ where $M(x):=\left \{ m\in
M_{\mathcal{X}}:x\leqslant m\right \} $ and, dually $x=\vee J(x)$ where $%
J(x):=\left \{ j\in J_{\mathcal{X}}:j\leqslant x\right \} $.

The partially ordered set $\mathcal{X}=(X,\leqslant )$ is said to be a
(finite)\textit{\ join-semilattice (meet-semilattice}, respectively) if and
only if the least upper bound or joint $x\vee y$ (the greatest lower bound
or meet $x\wedge y$) is well-defined in $X$ for all $x,y\in X$, so that $%
\vee :X\times X\rightarrow X$ ($\wedge :X\times X\rightarrow X$) is a
function.\footnote{%
From now on, in order to avoid tedious repetitions and if any ambiguity is
excluded, we will basically only define one of the two main notions of
(joint or meet) semilattice used in the paper, assuming that the other can
simply be obtained by duality.} $\mathcal{X}=(X,\leqslant )$ is a \textit{%
lattice }if it is both a join-semilattice and a meet-semilattice. Notice
that a finite join-semilattice $\mathcal{X}=(X,\leqslant )$ has a (unique)
universal upper bound or \textit{top element} $\mathbf{1}=\vee X$, and its 
\textit{co-atoms} are those elements $x\in X$ such that $x\ll \mathbf{1}$,
with $\ll $ denoting the so-called \emph{cover relation}, meaning that there
is no $y\neq x$ such that $x\leqslant y\leqslant \mathbf{1}$ (see Appendix
C). The set of co-atoms of $\mathcal{X}=(X,\leqslant )$ is denoted by $%
\mathcal{C}_{\mathcal{X}}$. Dually, a finite meet-semilattice $\mathcal{X}%
=(X,\leqslant )$ has a (unique) universal lower bound or \textit{bottom
element} $\mathbf{0}=\wedge X$, and its \textit{atoms} are those elements $%
x\in X$ such that $\mathbf{0}\ll x$. The set of atoms of $\mathcal{X}$ is
denoted by $\mathcal{A}_{\mathcal{X}}$. Notice that a co-atom (atom,
respectively) is also a meet-irreducible (join-irreducible, respectively)
element. When co-atoms and meet-irreducibles (atoms and join-irreducibles)
do in fact \textit{coincide} the join-semilattice (meet-semilattice) is said
to be \textit{coatomistic} (\textit{atomistic}, respectively).

Let us now introduce the class of finite join-semilattices which is the
focus of the present paper.\medskip

\textbf{Definition 1.} A (finite) join-semilattice $\mathcal{X}=(X,\leqslant
)$ is \emph{median\ }if it satisfies the following pair of
conditions:\smallskip

\begin{description}
\item[$\left( i\right) $] \emph{upper distributivity}\textit{:} for all $%
u\in X$, and for all $x,y,z\in X$ such that $u$ is a lower bound of $%
\left
\{ x,y,z\right \} $, $x\vee (y\wedge z)=(x\vee y)\wedge (x\vee z)$
(or, equivalently, $x\wedge (y\vee z)=(x\wedge y)\vee (x\wedge z)$) holds.
Namely, $(\uparrow u,\leqslant _{|\uparrow u})$, where $\leqslant
_{|\uparrow u}$denotes the restriction of $\leqslant $ to $\uparrow u$, is a 
\emph{distributive lattice}\textit{;\footnote{%
We recall that a poset $(Y,\leqslant )$ is a \textit{distributive lattice}
if and only if, for any $x,y,z\in X$ , $x\wedge y$ and $x\vee y$ exist, and $%
x\wedge (y\vee z)=(x\wedge y)\vee (x\wedge z)$ (or, equivalently, $x\vee
(y\wedge z)=(x\vee y)\wedge (x\vee z)$). Moreover, a (distributive) lattice $%
\mathcal{X}$ is said to be \textit{lower (upper) bounded }if there exists $%
\bot \in X$ ($\top \in X)$ such that $\bot \leqslant x$ ($x\leqslant \top $)
for all $x\in X$, and \textit{bounded}\textbf{\ }if it is both lower bounded
and upper bounded.}}\smallskip

\item[$\left( ii\right) $] \emph{co-coronation}\textit{\ (\emph{or meet-Helly%
}): }for all $x,y,z\in X$ if $x\wedge y$, $y\wedge z$ and $x\wedge z$ exist,
then $(x\wedge y\wedge z)$ also exists\textit{.\medskip }
\end{description}

A well-known property of (finite) upper distributive join-semilattices that
will be used below is in the following:

\begin{claim}
Let $m\in M_{\mathcal{X}}$ be a meet-irreducible element of an upper
distributive finite join-semilattice $\mathcal{X}=(X,\leqslant )$ and $%
Y\subseteq X$ such that $\wedge Y$ exists. If $\wedge Y<m$ then there also
exists some $y\in Y$ such that $y\leqslant m$ (see e.g. Monjardet (1990)).
\end{claim}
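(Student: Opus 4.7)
The plan is to reduce the statement to a direct application of meet-irreducibility by cleverly constructing a set whose meet equals $m$. Let $u=\wedge Y$; the hypothesis $\wedge Y<m$ gives $u\leqslant m$, so both $m$ and every element of $Y$ lie in the principal order filter $\uparrow u$. By upper distributivity, $(\uparrow u,\leqslant_{|\uparrow u})$ is a finite distributive lattice, hence in particular a meet-semilattice in which all finite meets and joins exist and, being computed by common lower/upper bounds, coincide with those in $X$ whenever the latter are defined.

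Next I would introduce the auxiliary set $Z=\{y\vee m:y\in Y\}\subseteq \uparrow m\subseteq \uparrow u$. Working inside the distributive lattice $\uparrow u$ and iterating the dual distributive law $(a\wedge b)\vee c=(a\vee c)\wedge(b\vee c)$, one obtains
\[
\bigwedge_{y\in Y}(y\vee m)=\Bigl(\bigwedge_{y\in Y}y\Bigr)\vee m=u\vee m=m,
\]
where the last equality uses $u\leqslant m$. A brief verification then shows this meet, computed in $\uparrow u$, actually coincides with $\wedge Z$ in $X$: since $u$ is a common lower bound of $Z$ in $X$, any lower bound $v$ of $Z$ in $X$ satisfies $v\vee u\in \uparrow u$ and is still a lower bound of $Z$, so the maximum lower bound in $\uparrow u$ is also the maximum in $X$.

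Having established $m=\wedge Z$ in $X$, the meet-irreducibility of $m$ forces $m\in Z$, i.e.\ $y_0\vee m=m$ for some $y_0\in Y$, which is exactly $y_0\leqslant m$, yielding the claim.

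The only delicate point I anticipate is the coincidence of the meet of $Z$ in $\uparrow u$ with its meet in $X$; everything else is a direct manipulation. It is tempting to skip this check, but it is needed because the definition of meet-irreducibility is stated relative to meets \emph{in $X$}. Once that bookkeeping is in place, the argument is short and essentially a one-line application of distributivity followed by the defining property of $m$.
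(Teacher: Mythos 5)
Your argument is correct. Note that the paper itself gives no proof of this claim: it is stated with a bare pointer to Monjardet (1990), so there is no in-text argument to compare against. Your reduction is a clean, self-contained route: passing to the principal filter $\uparrow u$ with $u=\wedge Y$ (which is a finite distributive lattice by the very definition of upper distributivity used in the paper), applying the iterated dual distributive law to get $\bigwedge_{y\in Y}(y\vee m)=u\vee m=m$, and then invoking meet-irreducibility on the set $Z=\{y\vee m:y\in Y\}$. You were right to flag the one delicate point, namely that $\wedge Z$ computed in $\uparrow u$ agrees with $\wedge Z$ in $X$, since the paper's definition of meet-irreducibility quantifies over subsets of $X$; your verification (any lower bound $v$ of $Z$ in $X$ satisfies $v\vee u\leqslant m$, hence $v\leqslant m$) closes that gap correctly. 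Also note implicitly that $Y\neq\emptyset$ under the hypothesis $\wedge Y<m$, so $Z$ is nonempty. For comparison, the argument usually given in the latticial-consensus literature proceeds by contraposition through the unique upper cover $m^{+}$ of the meet-irreducible $m$: if $y\nleqslant m$ for every $y\in Y$, then $m^{+}\leqslant y\vee m$ for all $y$, and distributivity in $\uparrow u$ forces $m^{+}\leqslant(\wedge Y)\vee m=m$, a contradiction. That variant trades your explicit bookkeeping about where meets are computed for a small structural fact about meet-irreducibles in finite lattices; the two are essentially equivalent in length, and yours has the advantage of using only the definitions already set up in the paper.
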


It is easily checked that if $\mathcal{X}=(X,\leqslant )$ is a median
join-semilattice, then the partial function $\mu :X^{3}\rightarrow X$
defined as follows: for all $x,y,z\in X$,%
\begin{equation*}
\mu (x,y,z)=(x\vee y)\wedge (y\vee z)\wedge (x\vee z)
\end{equation*}%
is indeed a \textit{well-defined ternary operation }on $X$\textit{, }the%
\textbf{\ median}\textit{\ }of $\mathcal{X}$ which satisfies the following
two characteristic properties (see Sholander (1952, 1954):\smallskip

$(\mathbf{\mu }_{1})$ $\mu (x,x,y)=x$ for all $x,y\in X$

$(\mathbf{\mu }_{2})$ $\mu (\mu (x,y,v),\mu (x,y,w),z)=\mu (\mu (v,w,z),x,y)$
for all $x,y,v,w,z\in X$.\smallskip

Relying on $\mu $, we define a \textit{ternary }(\textit{median-induced}) 
\textbf{betweenness }relation:%
\begin{equation*}
B_{\mathcal{\mu }}=\left \{ (x,z,y)\in X^{3}:z=\mu (x,y,z)\right \}
\end{equation*}%
on $\mathcal{X}$, and, for any $x,y\in X$, the \textit{interval }induced by $%
x$ and $y$, namely:%
\begin{equation*}
I^{\mu }(x,y):=B_{\mu }(x,.,y)=\left \{ z\in X:z=\mu (x,y,z)\right \} \text{.%
}
\end{equation*}%
Therefore,\ for any $x,y,z\in X$, $(x,z,y)\in B_{\mathcal{\mu }}$ (also
written\textit{\ }$B_{\mathcal{\mu }}(x,z,y)$) if and only if\ $z\in I^{\mu
}(x,y)$. We recall here that the most appropriate interpretation of the
betweeness relation consists of considereing an `outcome $z$ that lies
between outcomes $x$ and $y$' as a `natural compromise' between $x$ and $y$,
namely, the betweenness relation is meant to represent a shared structure of
compromises between outcomes.

Furthermore, a (finite) median join-semilattice $\mathcal{X}=(X,\leqslant )$
admits a \textit{rank-based metric }$d_{r}:X\times X\rightarrow \mathbb{Z}%
_{+}$,\footnote{%
We recall that a \textit{metric} on $X$ is a real-valued function $\delta
:X\times X\rightarrow \mathbb{R}_{+}$ such that for any $x,y,z\in X$: (i) $%
\delta (x,y)=0$ iff $x=y$; (ii) $\delta (x,y)=\delta (y,x)$; (iii) $\delta
(x,z)\leq \delta (x,y)+\delta (y,z)$.} (with $r:X\rightarrow \mathbb{Z}_{+}$
denoting a \textit{rank function} (see Appendix C for the definition)),
defined, for any $x,y\in X$, as $d_{r}(x,y)=2r(x\vee y)-r(x)-r(y)$.\footnote{%
The metric median induced by this metric is strictly related to some
aggregation rules to be discussed in the present work. More details on this
topis are provided in Appendix C.}

Let $\succcurlyeq $ be a preorder, namely a reflexive and transitive binary
relation, on $X$,\footnote{%
We denote with $\succ $ and $\sim $ the asymmetric and symmetric components
of $\succcurlyeq $, respectively.} then we denote with $Top(\succcurlyeq )$
the possibly empty set of its maxima, and with $||$ the set of its \textit{%
incomparable }ordered pairs, i.e. $x||y$ if and only if neither $%
x\succcurlyeq y$ nor $y\succcurlyeq x$ hold. Then, we say that:\medskip

\textbf{Definition 2. }$\succcurlyeq $ is \textbf{locally unimodal\ }with
respect to the betweenness relation $B_{\mu }$, or $B_{\mu }$-\textbf{lu},
if and only if:

\begin{description}
\item[$(i)$] there exists a \textit{unique maximum} of $\succcurlyeq $ in $X$%
, its \textit{top }outcome, denoted $top(\succcurlyeq )$, and\smallskip

\item[$(ii)$] for all $x,y,z\in X$, if $z\in I^{\mu }(top(\succcurlyeq
),y)\backslash \left \{ top(\succcurlyeq )\right \} $ then $\left( not\text{ 
}y\succ z\right) $.\smallskip
\end{description}

The local unimodality of $\succcurlyeq $ amounts to the requirement that
individual preference relations have a unique maximum or top outcome and be
such that an outcome located between the maximum and another distinct
outcome is invariably regarded as not worse than the latter. Thus, \textit{%
local unimodality} is in fact nothing else than the specific notion of 
\textbf{single-peakedness }we are going to use in the present work.

We denote by $U_{B_{\mu }}$ the set of all $B_{\mu }$-\textbf{lu} preorders
on $X$ and by $U_{B_{\mu }}^{N}$ the set of all $N$-profiles of $B_{\mu }$-%
\textbf{lu} preorders, where an $N$-profile of $B_{\mu }$-\textbf{lu}
preorders is a mapping from $N$ into $U_{B_{\mu }}$. Moreover, we call a set 
$D_{\mathcal{X}}\subseteq U_{B_{\mu }}^{N}$ of locally unimodal preorders
with respect to $B_{\mu }$ \textbf{rich}\textit{\ }if for all $x,y\in X$
there exists $\succcurlyeq \in D_{\mathcal{X}}$ such that $top(\succcurlyeq
)=x$ and $UC(\succeq ,y)=I^{\mu }(x,y)$ (where $UC(\succeq ,y):=\left \{
y\in X:x\succcurlyeq y\right \} $ is the upper contour of $\succcurlyeq $ at 
$y$).

An \textbf{aggregation rule} for $(N,X)$ is a function $f:X^{N}\rightarrow X$%
. We (occasionally) consider both weaker and stricter versions of
aggregation rules, namely:\smallskip

$\left( i\right) $ a \textit{restricted aggregation rule} for $(N,X)$ is a
function $f:D\rightarrow X$ for some $D\subseteq X^{N}$;\smallskip

$\left( ii\right) $ a \textit{multi-aggregation rule }for $(N,X)$ is a
function $f:X^{N}\rightarrow \mathcal{P}(X)\setminus \left \{ \emptyset
\right \} $.\footnote{%
Notice that a multi-aggregation aggregation rule for $(N,X)$ can also be
regarded as an instance of a \textit{restricted aggregation rule }for $(N,%
\mathcal{P}(X))$.}\smallskip

$\left( iii\right) $ (by contrast), a \textit{constrained aggregation rule}
for $(N,X)$ is a function $f:X^{N}\rightarrow C$ for some $C\subseteq X$%
.\smallskip

All of the aforementiond notions of aggregation rule have been considered in
the relevant literature. In the present paper we shall focus on `pure'
aggregation rules (and occasionally on \textit{constrained }ones).

To proceed we define two compelling conditions on aggregation rules for
median joint-semi lattice which will play a crucial role in our main
characterization result.\smallskip

\textbf{Definition 3.} An aggregation rule $f:X^{N}\rightarrow X$ is said to
be:\smallskip

$\left( i\right) $ \textbf{strategy-proof }\textit{on }$U_{B_{\mu }}^{N}$ if
and only if, for all $B_{\mu }$-unimodal $N$-profiles $(\succcurlyeq
_{i})_{i\in N}\in $ $U_{B_{\mu }}^{N}$, and for all $i\in N$, $y_{i}\in X$,
and $(x_{j})_{j\in N}\in X^{N}$ such that $x_{j}=top(\succcurlyeq _{j})$ for
each $j\in N$, \textit{not }$f((y_{i},(x_{j})_{j\in N\smallsetminus \left \{
i\right \} }))\succ _{i}f((x_{j})_{j\in N})$;\smallskip

$\left( ii\right) $ $B_{\mu }$-\textbf{monotonic} if and only if, for all $%
i\in N$, $y_{i}\in X$, and $(x_{j})_{j\in N}\in X^{N}$,%
\begin{equation*}
f((x_{j})_{j\in N})\in I^{\mu }(x_{i},f(y_{i},(x_{j})_{j\in N\backslash
\left \{ i\right \} }))\text{.\footnote{$B_{\mu }$-monotonicity (or,
equivalently, $\mathcal{I}^{\mu }$-monotonicity) of $f$ amounts to requiring
all of its projections $f_{i}$ to be \textit{gate maps }to the image of $f$
(see van de Vel (1993), p.98 for a definition of gate maps). The
introduction of $B_{\mu }$-monotonic functions in a strategic social choice
setting is essentially due to Danilov (1994).}}
\end{equation*}%
\smallskip 

Indeed, a reliable and effective decision protocol should be reputedly 
\textit{strategy-proof}, i.e. immune to advantageous individual
manipulations through submission of false information, and $B_{\mu }$\emph{%
-monotonic}, i.e. the outcome that an agent obtains by submitting a certain
outcome $x$ lies between $x$ itself and the outcome that the agent would
obtain by submitting another outcome (for any fixed profile of
proposals/submissions on the part of the other agents). Thus, both
strategy-proofness on single-peaked domains and $B_{\mu }$-monotonicity of
an aggregation rule defined on a median semilattice are properties that are
by construction strictly related to the median-induced betweenness of the
semilattice (it will be shown below that they are in fact equivalent).

We further observe that non-trivial strategy-proof aggregation rules should
be, at least to some extent, \textit{input-responsive }and both \textit{%
input-unbiased} and \textit{output-unbiased. }A few requirements can be
deployed to present several versions, degrees and combinations of
input-responsiveness, input-unbiasedness and output-unbiasedness of
aggregation rules, namely:\smallskip

\textbf{Definition 4.} An aggregation rule $f$ for $(N,X)$ is:

\begin{itemize}
\item \textbf{inclusive }if and only if, for each voter $i\in N$, there
exist $x^{N}\in X^{N}$ and $y_{i}\in X$ such that $f(x^{N\backslash \left \{
i\right \} },y_{i})\neq f(x^{N})$;

\item \textbf{anonymous }if, for each $x^{N}\in X^{N}$ and each permutation $%
\sigma $ of $N$, $f(x^{N})=f(x^{\sigma (N)})$ (where $x^{\sigma
(N)}=(x_{\sigma (1)},...,x_{\sigma (n)})$);

\item \textbf{idempotent }(or \textit{unanimity-respecting) }if\ $%
f(x,...,x)=x$ for each $x\in X$;

\item \textbf{sovereign }if, for each $y\in X$, there exists $x^{N}\in X^{N}$
such that $f(x^{N})=y$ i.e. $f$ is an \textit{onto }function;

\item \textbf{neutral }if, for each $x^{N}\in X^{N}$ and each permutation $%
\pi $ of $X,$ $f(\pi (x^{N}))=\pi (f(x^{N}))$ (where $\pi (x^{N})=(\pi
(x_{1}),...,\pi (x_{k})))$.\footnote{%
Notice that both Idempotence and Neutrality imply Sovereignty (but not
conversely), while Anonymity and Sovereignty jointly imply Inclusiveness
(but not conversely). However, it is easily checked that if
Strategy-proofness holds, Sovereignty and Idempotence are in fact equivalent.%
}\medskip 
\end{itemize}

Anonimity, Idempotence and Neutral have a straighforward standard meaning.
Inclusiveness means that for each agent there exists at least one profile of
outcomes at which her own outcome turns out to be pivotal. Sovereignity
entails that every element of the aggregation rule's codomain, i.e. each
possible outcome, is the image of at least one element of its domain, i.e.
of each possible outcome profile.

In particular, let $\mathcal{X}=(X,\leqslant )$ be a finite \textit{%
join-semilattice} and $M_{\mathcal{X}}$ the set of its \emph{meet-irreducible%
} elements, and for any $x^{N}\in X^{N}$, and any $m\in M_{\mathcal{X}}$,
posit $N_{m}(x^{N}):=\left \{ i\in N:x_{i}\leqslant m\right \} $. Then, the
following properties of an aggregation rule can also be introduced:\smallskip

\textbf{Definition 5.} An aggregation rule $f:X^{N}\rightarrow X$ is:

$\left( i\right) $ $M_{\mathcal{X}}$\textbf{-independent }if and only if,
for all $x_{N},y_{N}\in X^{N}$ and all $m\in M_{\mathcal{X}}$, if $%
N_{m}(x_{N})=N_{m}(y_{N})$ then $f(x_{N})\leqslant m$ if and only if $%
f(y_{N})\leqslant m$;\smallskip

$\left( ii\right) \mathbf{\ }$\textbf{Isotonic} if $f(x_{N})\leqslant
f(x_{N}^{\prime })$ for all $x_{N},x_{N}^{\prime }\in X^{N}$ such that $x_{N}%
\mathbf{\leqslant }x_{N}^{\prime }$ (i.e. $x_{i}\leqslant x_{i}^{\prime }$
for each $i\in N$).\smallskip

Thus, an $M_{\mathcal{X}}$\textbf{-}independent\textbf{\ }rule ensures that
at any pair of profiles having the same set of agents proposing an outcome
consistent (respectively, inconsistent) with a certain join-irreducible
element, the social outcome will also be either consistent or inconsistent
with the latter in both cases. An aggregation rule $f$ is isotonic if it is
an order-preserving function. It can be easily shown (see Monjardet (1990))
that the \textit{conjunction} of $M_{\mathcal{X}}$\textbf{-Independence }and 
\textbf{Isotony }is equivalent to the following condition:\smallskip

\textbf{Definition 6.} An aggregation rule $f:X^{N}\rightarrow X$ is \textbf{%
monotonically }$M_{\mathcal{X}}$-\textbf{independent }if and only if, for
all $x_{N},y_{N}\in X^{N}$ and all $m\in M_{\mathcal{X}}$, if $%
N_{m}(x_{N})\subseteq N_{m}(y_{N})$ then $f(x_{N})\leqslant m$ implies $%
f(y_{N})\leqslant m$.\footnote{%
The notions of $J_{\mathcal{X}}$-Independence and Monotonic $J_{\mathcal{X}}$%
-Independence are defined similarly by dualization for a finite median
inf-semilattice $\mathcal{X}=(X,\leqslant )$ as follows: for all $%
x_{N},y_{N}\in X^{N}$ and all $j\in J_{\mathcal{X}}$, if $%
N_{j}(x_{N}):=\left \{ i\in N:j\leqslant x_{i}\right \} \subseteq
N_{j}(y_{N}):=\left \{ i\in N:j\leqslant y_{i}\right \} $, then $j\leqslant
f(x_{N})$ implies $j\leqslant f(y_{N})$.}\smallskip 

It should be noticed that $M_{\mathcal{X}}$\textbf{-Independence }amounts to
a weakening of the \textit{Arrovian Independence of Irrelevant Alternatives}
(more on this in Example 3 of Section 4).

\section{Main results}

We are now ready to state the main result of this paper concerning
strategy-proofness of aggregation rules on rich domains of locally unimodal
profiles in median join-semilattices.\footnote{%
A similar result holds for finite median meet-semilattices, and can be
easily established by dualization of the relevant arguments.}\smallskip

\textbf{Theorem 1.} Let $\mathcal{X}=(X,\leqslant )$ be a finite \textit{%
median join-semilattice, }$B_{\mu }\mathcal{\ }$its median-induced
betweenness, and $f:X^{N}\rightarrow X$ an aggregation rule for $(N,X)$.
Then, the following statements are equivalent:\smallskip

\ $(i)$ $f$ is strategy-proof on $D^{N}$ for any rich domain $D\subseteq
U_{B_{\mu }}$ of locally unimodal preorders w.r.t. $B_{\mu }$ on $X$%
;\smallskip

$(ii)$ $f$ is $B_{\mu }$-monotonic\textbf{;\smallskip }

$(iii)$ $f$ is monotonically $M_{\mathcal{X}}$-independent.\smallskip

A similar argument is used for the case of \textit{total} preorders on (not
necessarily finite) bounded \textit{distributive lattices} in Savaglio and
Vannucci (2019), and in Vannucci (2019). It should also be emphasized here
that, obviously, (finite) distributive lattices are a prominent special
subclass of (finite) median join-semilattices.

As a consequence of Theorem 1, we obtain the following:\smallskip

\textbf{Corollary 1.} Let $\mathcal{X}=(X,\leqslant )$ be a finite \textit{%
median join-semilattice, }$B_{\mu }\mathcal{\ }$its median-induced
betweenness, and \ $f:X^{N}\rightarrow X$ \ an aggregation rule. Then, the
following statements are equivalent:\smallskip

$(i)$ $f$ is strategy-proof on $D^{N}$ for every rich domain $D\subseteq
U_{B_{\mu }}$ of locally unimodal preorders with respect to. $B_{\mu }$ on $%
X $;\smallskip

$(ii)$ for each $m\in M_{\mathcal{X}}$ there exists an order filter $F_{m}$
of $(\mathcal{P}(N),\subseteq )$ such that:%
\begin{equation*}
f(x_{N})=f_{\left \{ F_{m}:m\in M_{\mathcal{X}}\right \}
}(x_{N}):=\tbigwedge \left \{ m\in M_{\mathcal{X}}:N_{m}(x_{N})\in
F_{m}\right \}
\end{equation*}%
for all $x_{N}\in X^{N}$.\smallskip

\begin{proof}
Immediate from Theorem 1 and dualization of Proposition 1.4 of Monjardet
(1990). In particular, each order filter $F_{m}$ consists of the \textit{%
locally }$m$\textit{-winning coalitions for }$f$\textit{,} namely for every $%
m\in M_{\mathcal{X}}$,%
\begin{equation*}
F_{m}:=\left \{ 
\begin{array}{c}
T\subseteq N:\text{there exists }x_{N}\in X^{N}\text{ such that } \\ 
\left \{ i\in N:x_{i}\leqslant m\right \} =T\text{ and }f(x_{N})\leqslant m%
\end{array}%
\right \} \text{.}
\end{equation*}
\end{proof}

It should be emphasized that the class of aggregation rules $f_{\left \{
F_{m}:m\in M_{\mathcal{X}}\right \} }$ identified by Corollary 1 is in
principle very comprehensive indeed. More specifically, Corollary $1.(ii)$
allows a broad description of such rules as those returning the \textit{%
strictest consensus among the admissible alternatives actually sponsored by
the agents of the relevant coalitions} (as specified by the order filters $%
F_{m}$). In particular, the class of aggregation rules thus characterized
encompasses a lot of suitably `inclusive' and/or `unbiased' rules, including
the following:

\begin{itemize}
\item \textit{Quorum system aggregation rules}, namely functions $%
f_{\left
\{ F_{m}:m\in M_{\mathcal{X}}\right \} }$ such that every order
filter $F_{m} $ is \textit{transversal, }i.e. $S\cap T\neq \varnothing $ for
all $S,T\in F_{m}$.

\item \textit{Inclusive aggregation rules, }namely functions $f_{\left \{
F_{m}:m\in M_{\mathcal{X}}\right \} }$ such that $\dbigcup \limits_{m\in M_{%
\mathcal{X}}}F_{m}^{\min }=N$.

\item \textit{Collegial aggregation rules, }namely functions $f_{\left \{
F_{m}:m\in M_{\mathcal{X}}\right \} }$ such that for some $m\in M_{\mathcal{X%
}}$, there exists a \textit{non-empty} $S_{m}\subseteq N$ with $%
F_{m}\subseteq \left \{ T\subseteq N:S_{m}\subseteq T\right \} $.

\item \textit{Outcome-biased aggregation rules}, namely functions $%
f_{\left
\{ F_{m}:m\in M_{\mathcal{X}}\right \} }$ where $F_{m}=\varnothing 
$ for some $m\in M_{\mathcal{X}}$.

\item \textit{Weakly neutral (or }$M_{\mathcal{X}}$-\textit{neutral)} 
\textit{aggregation rules}, namely functions $f_{\left \{ F_{m}:m\in M_{%
\mathcal{X}}\right \} }$ where $F_{m}=F_{m^{\prime }}$ whenever $m\wedge
m^{\prime }$ exists.

\item \textit{Quota aggregation rules}, namely anonymous aggregation rules
i.e. functions $f_{\left \{ F_{m}:m\in M_{\mathcal{X}}\right \} }$ such that
for each $m\in M_{\mathcal{X}}$ there exists an integer $q_{[m]}\leq |N|$
with $F_{m}=\left \{ T\subseteq N:q_{[m]}\leq |T|\right \} $. In particular, a
quota aggregation rule is also $M_{\mathcal{X}}-$\textit{neutral} (or weakly
neutral) if and only if $q_{[m]}=q_{[m^{\prime }]}$ whenever $m\wedge
m^{\prime }$ exists.
\end{itemize}

A prominent instance of \textit{a weakly neutral quota aggregation rule is} 
\textit{co-majority} as defined below.\textit{\smallskip }

\textbf{Definition 7.} (\textit{Co-majority rule) }Let $\mathcal{X}%
=(X,\leqslant )$ be a finite \textit{median join-semilattice, and }$N$ a
finite set. Then, the \textbf{co-majority rule} $f^{\partial maj}$\ for $%
(N,X)$ is defined as follows: for all $x_{N}\in X^{N}$,%
\begin{equation*}
f^{\partial maj}(x_{N}):=\dbigwedge \limits_{S\in \mathcal{W}%
^{maj}}(\dbigvee \limits_{i\in S}x_{i})
\end{equation*}%
where $W^{maj}=\left \{ S\subseteq N:|S|\geq \lfloor \frac{|N|+2}{2}\rfloor
\right \} $.\smallskip

It is easily seen, and left to the reader to check, that the co-majority
rule is in particular a positive instance of an \textit{idempotent,} \textit{%
inclusive} and \textit{transversal }aggregation rule.

As a further corollary of Theorem 1 and Corollary 1 we obtain a new
characterization of the co-majority rule via strategy-proofness, anonymity
as defined above and the following well-known general property for
aggregation rules, namely:\smallskip

\textbf{Definition 8.} An aggregation rule $f:X^{N}\rightarrow X$ is \textbf{%
Bi-Idempotent }whenever, for any $x_{N}\in X^{N}$ and $y,z\in X$, if $%
x_{i}\in \left \{ y,z\right \} $ for all $i\in N$, then $f(x^{N})\in
\left
\{ y,z\right \} $.\smallskip

Clearly, Bi-Idempotence amounts to a local requirement combining
`decisiveness' (the ability to select a single outcome) and `faithfulness'
(the ability to select the outcome among the proposals actually advanced)
both under perfect binary polarization and under perfect agreement.

Thus, we have the following characterization result of the co-majority
rule.\smallskip

\textbf{Proposition 1.} Let $\mathcal{X}$ $=(X,\leqslant )$ be a finite
median join-semilattice, $B_{\mathcal{\mu }}$ its median betweenness
relation, $D$ $\subseteq U_{B_{\mu }}$ a rich domain of locally unimodal
preorders with respect to $B_{\mathcal{\mu }}$. Then, an aggregation rule $%
f:X^{N}\rightarrow X$\ satisfies Anonymity, Bi-Idempotence and is
Strategy-proof on $D^{N}$ with $|N|$ odd if and only if $f$ is the
co-majority rule $f^{\partial maj}$.\smallskip

\begin{proof}
Immediate from Theorem 1 above and a straightforward dualization of
Corollary 7.4 of Monjardet (1990).
\end{proof}

The co-majority rule can also be seen as a way to compute a certain
metric-median of the outcome profiles to be aggregated. Of course, this also
applies to the case of total preorders to be discussed in the next section
(see Appendix C for more details on this important topic).

We are now ready to consider a most significant application of the previous
results that involves \textit{strategy-proof} \textit{aggregation of\
preferences} including (Arrowian) \textit{social welfare functions} and
their strategy-proofness properties\textit{: }the next section is entirely
devoted to that topic.

\section{Applications to strategy-proof preference aggregation}

The major examples of finite median join-semilattices we are going to
analyze involve the set of all total preorders on a finite set and consider
the corresponding (pure) aggregation rules, that are of course \textit{%
social welfare functions}. There are several but subtly distinct ways of
relating the set of total preorders to a median join-semilattice, regard
such collection as a subset of a larger collection of admissible preference
relations (e.g. reflexive and connected, or even just reflexive binary
relations). We focus on the first and more straightforward example, while
discussing only briefly the similar results that obtain from the application
of Theorem 1 and its Corollaries to the latter cases.\medskip

\textbf{Total preorders and social welfare functions.\smallskip }

\textbf{Example 1.} \textbf{The join-semilattice of total preorders on a
finite set.}

Let $A$ be a nonempty finite set of alternative social states, $\mathcal{R}%
_{A}^{T}$ the set of all total preorders (i.e. reflexive, transitive and
connected binary relations) on $A$. Let us define the join of two total
preorders on $A$ as the \textit{transitive closure} $\overline{\cup }$ of
their set-theoretic union. Then, by construction, $\mathcal{X}^{\prime }:=(%
\mathcal{R}_{A}^{T},\overline{\cup })$ is a join-semilattice, and satisfies
both \textit{upper distributivity} (by Claim (P.1) of Janowitz (1984)), and 
\textit{co-coronation} (by Claims (P.3) and (P.5) of Janowitz (1984)). It
follows that $(\mathcal{R}_{A}^{T},\overline{\cup })$ thus defined is indeed
a \textit{median join-semilattice }whose median ternary operation is denoted
here $\mu $, and its meet-irreducibles are the \textit{total preorders }$%
R_{A_{1}A_{2}}\in \mathcal{R}_{A}^{T}$ having just two (non-empty)
indifference classes $A_{1},A_{2}$ such that (i) $(A_{1},A_{2})$ is a
two-block ordered partition of $A$, written $(A_{1},A_{2})\in \Pi _{A}^{(2)}$%
, namely $A_{1}\cup A_{2}=A$, $A_{1}\cap A_{2}=\emptyset $ and (ii) [$%
xR_{A_{1}A_{2}}y$ and \textit{not }$yR_{A_{1}A_{2}}x$] if and only if $x\in
A_{1}$ and $y\in A_{2}$. Such total preorders $R_{A_{1}A_{2}}$ with $%
(A_{1},A_{2})\in \Pi _{A}^{(2)}$ are also the \textit{co-atoms of }$(%
\mathcal{R}_{A}^{T},\overline{\cup })$, hence, the join-semilattice of total
preorders is in particular \textit{co-atomistic.}

Thus, a most interesting application of our main result involves \textit{%
aggregation rules for preference profiles of total preorders }namely \textit{%
social welfare functions} $f:\mathcal{R}_{A}^{N}\rightarrow \mathcal{R}_{A}$
in the classic Arrowian sense.\footnote{%
Let $N,A$ \ be two (finite) sets and $\mathcal{R}_{A}^{T}$ the set of all
total preorders on $A$. An (\textit{Arrowian) social welfare function} for $%
(N,A)$ is a function $f:$ $(\mathcal{R}_{A})^{N}\rightarrow \mathcal{R}_{A}$%
, namely a function specifying a unique total preference preorder on $A$ for
every profile of $n$ total preference preorders on $A$. No further
requirement such as Independence of Irrelevant Alternatives or
Idempotence/Unanimity is assumed (see e.g. Sethuraman, Teo, Vohra (2003), or
Nehring, Puppe (2010) for the latter usage of the term). In the rest of this
paper Arrowian social welfare functions will be often referred to simply as 
\textit{`social welfare functions'. }Occasionally, and somewhat confusingly,
the very same label is also used to refer to (what we shall rather denote
as) \textit{Arrowian strict social welfare functions }$f:$ $(\mathcal{L}%
_{A})^{N}\rightarrow \mathcal{L}_{A}$ where $\mathcal{L}_{A}$ is the set of
all \textit{linear orders} (i.e. \textit{antisymmetric }total preorders on $%
A $). By contrast, a \textit{Bergson-Samuelson social welfare function }for $%
(N,A)$ is a function $f:\left \{ r^{N}\right \} \rightarrow \mathcal{R}%
_{A}^{T} $, with $r^{N}\in \mathcal{R}_{A}^{T}$.} Such an application is
made precise by the following proposition.\smallskip

\textbf{Proposition 2.} Let $A$ be a nonempty finite set of alternative
social states, $\mathcal{R}_{A}$ the set of all total preorders of $A$, $%
\mathcal{X}:=(\mathcal{R}_{A},\overline{\cup })$ the join-semilattice of
total preorders of $A$, $\mu $ the median ternary operation of $\mathcal{X}$%
, $B_{\mu }$ the corresponding betweenness relation as previously defined, $%
M_{\mathcal{X}}$ the set of all meet-irreducible elements of $\mathcal{X}$,
and $f:\mathcal{R}_{A}^{N}\rightarrow \mathcal{R}_{A}$ an aggregation rule
for $(N,\mathcal{R}_{A})$. Then, the following statements are equivalent:

$(i)$\ $f$ is strategy-proof on $D^{N}$ for every rich domain $D\subseteq
U_{B_{\mu }}$ of locally unimodal preorders w.r.t. $B_{\mu }$ on $\mathcal{R}%
_{A}$;

$(ii)$ for each $m\in M_{\mathcal{X}}$ there exists an order filter $F_{m}$
of $(\mathcal{P}(N),\subseteq )$ such that, for all $R_{N}\in \mathcal{R}%
_{A}^{N}$,%
\begin{equation*}
f(R_{N})=f_{\left \{ F_{m}:m\in M_{\mathcal{X}^{\prime }}\right \}
}(R_{N}):=\dbigcap \left \{ m\in M_{\mathcal{X}}:\left \{ i\in
N:R_{i}\subseteq m\right \} \in F_{m}\right \} \text{.}
\end{equation*}

\begin{proof}
Immediate, from Theorem 1 and Example 1.
\end{proof}

Notice that, as a consequence of the prevous characterization result, there
exist a large class of `classical' Arrowian social welfare functions on $%
(N,A)$ which are \textit{inclusive} and \textit{idempotent} (or
unanimity-respecting) as well as \textit{strategy-proof }on an arbitrary
rich domain of locally unimodal preorders with respect to the betweenness
relation $B_{\mu }$ of $(\mathcal{R}_{A}^{T},\overline{\cup })$ or $(%
\mathcal{T}_{A},\cup )$). Such a large class includes aggregation rules
which are respectively \textit{neither} anonymous nor neutral, \textit{just}
anonymous, \textit{just} neutral, or \textit{both} anonymous and neutral. To
see this, consider the following list of examples:

\begin{itemize}
\item \textit{Inclusive quorum systems, }namely functions $f_{\left \{
F_{m}:m\in M_{\mathcal{X}}\right \} }$ such that every order filter $F_{m}$
is \textit{transversal }i.e. $S\cap T\neq \varnothing $ for all $S,T\in
F_{m} $ and $\dbigcup \limits_{m\in M_{\mathcal{X}}}F_{m}^{\min }=N$.
Observe that such a\ class includes any rule such that for every $R_{m}\in
M_{\mathcal{X}} $, $F_{m}$ is \textit{simple-majority collegial }i.e. there
exists a \textit{minimal} simple majority coalition $S_{m}\subseteq N$, $%
|S_{m}|=\left \lfloor \frac{|N|+2}{2}\right \rfloor $ with $F_{m}=\left \{
T\subseteq N:S_{m}\subseteq T\right \} $. Generally speaking, inclusive
quorum systems need not be anonymous or neutral.

\item \textit{Outcome-biased aggregation rules}, namely functions $%
f_{\left
\{ F_{m}:m\in M_{\mathcal{X}}\right \} }$ where $F_{m}=\varnothing 
$ for some $m\in M_{\mathcal{X}}$. Observe that they include the subclass of
those aggregation rules such that for some \textit{total preorder }$%
\overline{R}\in \mathcal{R}_{A}^{T}$,\textit{\ }including possibly a\textit{%
\ linear order, }$F_{m}=\varnothing $ for every $m\in M_{\mathcal{X}}$ such
that $\overline{R}\subseteq m$.

\item (Weakly) \textit{Neutral aggregation rules}, namely functions $%
f_{\left \{ F_{m}:m\in M_{\mathcal{X}}\right \} }$ where $F_{m}=F_{m^{\prime
}} $ whenever $R_{m}\wedge R_{m^{\prime }}$ exists.

\item \textit{Quota aggregation rules}, namely functions $f_{\left \{
F_{m}:m\in M_{\mathcal{X}}\right \} }$ such that for each $m\in M_{\mathcal{X%
}}$ there exists an integer $q_{[m]}\leq |N|$ with $F_{m}=\left \{
T\subseteq N:q_{[m]}\leq |T|\right \} $. Such rules are clearly anonymous,
but not necessarily neutral: they are of course neutral as well if,
furthermore, $F_{m}=F_{m^{\prime }}$ whenever $m\cap m^{\prime }$
exists.\smallskip
\end{itemize}

It is worth noticing that a large subclass of such aggregation rules $f_{%
\mathcal{F}_{M_{\mathcal{X}}}}$, (including \textit{positive quota
aggregation rules} and \textit{inclusive quorum systems}) satisfy the 
\textit{Basic Pareto Principle} (BP), as made precise by the following
definition and claim.\smallskip

\textbf{Definition 9.} $\left( \emph{Basic\ Pareto\ Principle\ (BP)}\right) $%
\textbf{\ }An aggregation rule $f:\mathcal{R}^{N}\rightarrow \mathcal{R}$
with $\mathcal{R}\in \left \{ \mathcal{R}_{A}^{T},\mathcal{T}_{A}\right \} $
satisfies BP if for every $x,y\in A$ and $R_{N}\in \mathcal{R}_{A}^{N}$, if $%
xR_{i}y$ for every $i\in N$ then $xf(R_{N})y$.\smallskip

\begin{claim}
Let $\mathcal{R}\in \mathcal{R}_{A}^{T}$ and $\ f_{\mathcal{F}_{M_{\mathcal{X%
}}}}:\mathcal{R}^{N}\rightarrow \mathcal{R}$ be an aggregation rule as
defined above such that $F_{m}\ $is a nontrivial proper order filter (i.e. $%
\varnothing \notin F_{m}\neq \varnothing $) for every $m\in M_{\mathcal{X}}$%
. Then $f_{\mathcal{F}_{M_{\mathcal{X}}}}$ satisfies BP.
\end{claim}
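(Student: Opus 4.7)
The proof is a direct contrapositive argument built on the structure of $M_{\mathcal{X}}$ worked out in Example~1. Fix $x,y\in A$ and $R_{N}\in \mathcal{R}_{A}^{N}$ with $xR_{i}y$ for every $i\in N$. By the explicit form of $f_{\mathcal{F}_{M_{\mathcal{X}}}}$ given in Corollary~1 (realized here as set-theoretic intersection, as in Proposition~2), showing $xf(R_{N})y$ amounts to showing that $(x,y)\in m$ for every $m\in M_{\mathcal{X}}$ such that $N_{m}(R_{N})\in F_{m}$. Equivalently, I will prove the contrapositive: whenever $m\in M_{\mathcal{X}}$ satisfies $(x,y)\notin m$, then $N_{m}(R_{N})\notin F_{m}$.

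The main step is to identify which meet-irreducibles can fail to contain $(x,y)$. By Example~1, every $m\in M_{\mathcal{X}}$ has the form $m=R_{A_{1}A_{2}}$ for some $(A_{1},A_{2})\in \Pi_{A}^{(2)}$, and as a binary relation $m=(A_{1}\times A_{1})\cup (A_{2}\times A_{2})\cup (A_{1}\times A_{2})$. Hence the ordered pairs \emph{missing} from $m$ are exactly those of the form $(a,b)$ with $a\in A_{2}$ and $b\in A_{1}$. Therefore $(x,y)\notin m$ forces $x\in A_{2}$ and $y\in A_{1}$. Moreover, for any total preorder $R_{i}\in \mathcal{R}_{A}$, the inclusion $R_{i}\subseteq m$ fails as soon as $R_{i}$ contains a single pair from $A_{2}\times A_{1}$.

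Combining these two observations does the job. Under the hypothesis $xR_{i}y$, the pair $(x,y)$ lies in $R_{i}$ for every $i$; since $x\in A_{2}$ and $y\in A_{1}$, this pair does \emph{not} lie in $m$, so $R_{i}\not\subseteq m$ for every $i\in N$. Consequently $N_{m}(R_{N})=\{i\in N:R_{i}\subseteq m\}=\varnothing$. Because $F_{m}$ is by assumption a \emph{proper} order filter, i.e.\ $\varnothing \notin F_{m}$, we conclude $N_{m}(R_{N})\notin F_{m}$, which is precisely the contrapositive statement we needed. Running this for every $m$ with $(x,y)\notin m$ gives $(x,y)\in f_{\mathcal{F}_{M_{\mathcal{X}}}}(R_{N})$, i.e.\ $xf(R_{N})y$, establishing BP.

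There is no real obstacle here: the nontriviality hypothesis $F_{m}\neq \varnothing$ plays no role in the argument (it serves to rule out degeneracies elsewhere), and properness ($\varnothing \notin F_{m}$) is used at exactly one point, to exclude the empty coalition $N_{m}(R_{N})=\varnothing$ from being ``winning'' for the bipartite preorders that strictly rank $y$ above $x$. The only ingredient that needs care is the explicit description of the meet-irreducibles and the associated criterion $R_{i}\subseteq m \Leftrightarrow R_{i}\cap (A_{2}\times A_{1})=\varnothing$, both of which are immediate from Example~1.
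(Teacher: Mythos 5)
Your proof is correct and is essentially the paper's argument in contrapositive form: the key step in both is that a pair $(x,y)$ shared by every $R_{i}$ but absent from $m$ forces $R_{i}\not\subseteq m$ for all $i$, so the coalition $N_{m}(R_{N})$ is empty and properness of $F_{m}$ ($\varnothing\notin F_{m}$) excludes it from being winning. Your explicit description of the meet-irreducibles as $R_{A_{1}A_{2}}$ with missing pairs $A_{2}\times A_{1}$ is a correct but inessential embellishment (the paper gets by with bare set inclusion), and your observation that $F_{m}\neq\varnothing$ is not actually needed is accurate.
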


\begin{proof}
Suppose that $x,y\in A$ and $R_{N}\in \mathcal{R}_{A}^{N}$ are such that $%
xR_{i}y$ for every $i\in N$\textit{, yet not }$xf_{\mathcal{F}_{M_{\mathcal{X%
}}}}y$. Namely, by construction,%
\begin{equation*}
(x,y)\notin \dbigcap \left \{ m\in M_{\mathcal{X}}:\left \{ i\in
N:R_{i}\subseteq m\right \} \in F_{m}\right \} \text{\textit{.}}
\end{equation*}

Hence, there exists\textit{\ }$m\in M_{\mathcal{X}}$ such that $\left \{
i\in N:R_{i}\subseteq m\right \} \in F_{m}$ and $(x,y)\notin m$. However, by
assumption , $F_{m}$ is nonempty and \textit{every }$T\in F_{m}$ is itself
nonempty: thus, $N\in F_{m}$. But then $(x,y)\in R_{i}\subseteq m$ for any $%
i\in T$, a contradiction.
\end{proof}

A remarkable family of \textit{anonymous} but typically \textit{not} neutral
aggregation rules for $(N,\mathcal{R})$ (with $\mathcal{R}\in \left \{ 
\mathcal{R}_{A}^{T},\mathcal{T}_{A}\right \} $ ) is that of \textit{%
Condorcet-Kemeny rules}, as defined below (see also Young, Levenglick
(1978), Young (1995)).\smallskip

\textbf{Definition 10. }\textit{(Generalized Condorcet-Kemeny aggregation
rules) }Let\textit{\ }$\mathcal{X}:=(\mathcal{R},\cup )\in (\mathcal{R}%
_{A}^{T},\overline{\cup })$ be the join-semilattice of total preorders on
finite set $A$, $C(\mathcal{X})$ its covering graph, $\delta _{C(\mathcal{X}%
)}$ the shortest-path metric on $C(\mathcal{X})$,\footnote{%
See Appendix B for a precise definition of the covering graph and the
shortest-path metric of $\mathcal{X}$.} $\mathcal{L}_{A}\subseteq \mathcal{R}
$ the set of linear orders on $A$, $N$ a finite set, and $\mathbf{\preceq }$
a linear order on $\mathcal{R}$.

The \textit{generalized Condorcet-Kemeny aggregation rule }for $(N,\mathcal{R%
})$ induced by $\mathbf{\preceq }$ is the function \textit{\ }$f_{\preceq
}^{CK^{\ast }}:\mathcal{R}^{N}\rightarrow \mathcal{R}$ defined as follows:
for all $R_{N}\in \mathcal{R}^{N}$,%
\begin{equation*}
f_{\preceq }^{CK^{\ast }}(R_{N}):=\min_{\preceq }\left \{ 
\begin{array}{c}
R\in \mathcal{R}:\dsum \limits_{i\in N}\delta _{C(\mathcal{X})}(R,R_{i})\leq
\dsum \limits_{i\in N}\delta _{C(\mathcal{X})}(R^{\prime },R_{i})\text{ \ }
\\ 
\text{for all }R^{\prime }\in \mathcal{R}%
\end{array}%
\right \} \text{.}
\end{equation*}%
In particular, the (strict) \textit{Condorcet-Kemeny aggregation rule }for $%
(N,\mathcal{R})$ induced by $\mathbf{\preceq }$ is the function \textit{\ }$%
f_{\preceq }^{CK}:\mathcal{R}^{N}\rightarrow \mathcal{L}_{A}$ defined as
follows: for all $R_{N}\in \mathcal{R}^{N}$,%
\begin{equation*}
f_{\preceq }^{CK}(R_{N}):=\min_{\preceq }\left \{ 
\begin{array}{c}
R\in \mathcal{L}_{A}:\dsum \limits_{i\in N}\delta _{C(\mathcal{X}%
)}(R,R_{i})\leq \dsum \limits_{i\in N}\delta _{C(\mathcal{X})}(R^{\prime
},R_{i})\text{ } \\ 
\text{\ for all }R^{\prime }\in \mathcal{L}_{A}%
\end{array}%
\right \} \text{.}
\end{equation*}%
\smallskip

Notice that a (strict) Condorcet-Kemeny rule amounts to a \textit{constrained%
} generalized Condorcet-Kemeny rule. It should also be emphasized that
generalized Condorcet-Kemeny aggregation rules require a prefixed linear
order $\preceq $ as a tie-breaker device whenever the \textit{remoteness} 
\textit{function} $\dsum \limits_{i\in N}\delta _{C(\mathcal{X})}(.,R_{i})$
of a profile $R_{N}$ admits several distinct minima: that is \textit{the
only role} of $\preceq $ in $f_{\preceq }^{CK}$ and $f_{\preceq }^{CK^{\ast
}}$, and the source of the typical failure of Condorcet-Kemeny rules to
satisfy Neutrality.\footnote{%
The computational complexity issues raised by computation of the
Condorcet-Kemeny aggregation rule will not be addressed in the present work.
However, it is worth mentioning here that the computation of median total
preorders for arbitrary profiles of total preorders is a NP-complete problem
(i.e. it belongs to the class of the hardest problems whose solutions are
polynomial-time verifiable or `easy' to verify, but apparently worst-case
`hard' to compute). Specifically, if the size of $N$ is suitably larger than
the size of $A$, computing a median total preorder is NP-complete for
arbitrary profiles of total preorders or linear orders (Hudry (2012)) and
NP-hard (i.e. `easy' to reduce to a NP-complete problem) for arbitrary
profiles of binary relations (Wakabayashi (1998)).} It follows that, to the
extent that uniqueness of minima of the remoteness function is warranted,
the outcome of Condorcet-Kemeny rules is unaffected by the choice of $%
\preceq $ and Neutrality is restored. That is precisely the case when the
size $n$ of the set of agents $N$ is odd, as implied by the following
characterization result:\smallskip

\textbf{Proposition 3.} Let $\mathcal{X}:=(\mathcal{R}_{A},\overline{\cup })$
be the join-semilattice of total preorders on finite set $A$ as defined
above, $\mu $ its median ternary operation and $B_{\mu }$ the corresponding
betweenness relation as previously defined, $N$ a finite set such that $|N|$
is an \textit{odd number, }and $f:\mathcal{R}_{A}^{N}\rightarrow \mathcal{R}%
_{A}$ an aggregation rule for $(N,\mathcal{R}_{A})$. Then, the following
statements are equivalent:

$(i)$\ $f$\ satisfies Anonymity and Bi-Idempotence, and is strategy-proof on 
$D^{N}$ for every rich domain $D\subseteq U_{B_{\mu ^{\prime }}}$ of locally
unimodal preorders w.r.t . $B_{\mu }$ on $\mathcal{R}_{A}$;\smallskip

$(ii)$ $f=$ $\widehat{f}^{\partial maj}$where $\widehat{f}^{\partial maj}:%
\mathcal{R}_{A}^{N}\rightarrow \mathcal{R}_{A}$ denotes the co-majority
aggregation rule for $(N,\mathcal{R}_{A})$;\smallskip

$(iii)$ $f=f_{\preceq }^{CK}=f_{\preceq ^{\prime }}^{CK}$, i.e. the
generalized Condorcet-Kemeny aggregation rule for $(N,\mathcal{R}_{A})$ for
any pair of linear orders $\preceq $, $\preceq ^{\prime }$ on $\mathcal{R}$.

\begin{proof}
Immediate from Theorem 1, Proposition 1, Claim 1, and Example 1 above.
\end{proof}

Thus, in particular, when the size of $N$ is odd the generalized
Condorcet-Kemeny rule for $(N,\mathcal{R}_{A})$ is precisely the same as the
co-majority rule, and can be characterized as the unique aggregation rule
for $(N,\mathcal{R}_{A})$ (or, in other terms, the unique \textit{Arrowian
social welfare function)} which is Anonymous, Bi-Idempotent and
strategy-proof on $U_{B_{\mu ^{\prime }}}$(and any of its rich subdomains).%
\footnote{%
Is should be noticed that the requirement that $n=|N|$ be odd is not at all
as restrictive as it might seem at first sight. In fact, for $n$ even our
aggregation rule $f$ \ for $(N,\mathcal{R}_{A}^{T})$ might be embedded in a
natural way into a more comprehensive aggregation rule $\widetilde{f}$ for $%
(N,\mathcal{R}_{A}^{T}\times \mathbb{Z})$ (where $\mathbb{Z}$ denotes the
set of integer numbers) as supplemented with the natural projection from $%
\mathbb{Z}$ to the finite additive group $\mathbb{Z}_{n}$ of integers modulo 
$n$. Such an aggregation rule implements a pseudo-random `anonymous'
selection of a `president' in $N$ to the effect of producing an artificially
but fairly augmented `electorate' of \textit{odd }size. Furthermore, a
similar construct obtained by replacing $\mathbb{Z}_{n}$ with $\mathbb{Z}%
_{k} $ (where $k:=|A|$) results in a further aggregation rule $%
\overrightarrow{f}$ for $(N,\mathcal{R}_{A}^{T}\times \mathbb{Z})$ which
implements a pseudorandom `neutral' choice of \textit{one} linear order
among those consistent with the total preorder selected by $f$ at any
profile. Such an aggregation rule $\overrightarrow{f}$ is in fact \textit{%
constrained }(actually an $\mathcal{L}_{A}^{T}$-constrained one), since its
values are constrained to lie in $\mathcal{L}_{A}^{T}\subseteq \mathcal{R}%
_{A}^{T}$.}

Moreover, notice that (for an odd $n$) $\widehat{f}^{\partial maj}\equiv
f_{\preceq }^{CK^{\ast }}$satisfies a weak version of the so-called \textbf{%
Condorcet principle}, namely for every $(R_{i})_{i\in N}\in \mathcal{R}_{A}$
and $x\in A$, \textit{if} $x$ is a \textit{Condorcet winner}, that is $%
\left
\{ i\in N:xR_{i}y\text{ and \textit{not }}yR_{i}x\right \} \in 
\mathcal{W}^{maj}$ for every $y\in A\setminus \left \{ x\right \} $, \textit{%
then} $x\in Top(\widehat{f}^{\partial maj}((R_{i})_{i\in N}))$, (where for
any $R\in \mathcal{R}_{A}$, $Top(R):=\left \{ x\in A:xRy\text{ for all }y\in
A\right \} $).

To check this, suppose $x$ is indeed a Condorcet winner, yet $x\notin Top(%
\widehat{f}^{\partial maj}((R_{i})_{i\in N}))$. Thus, there exist $y\in
X\setminus \left \{ x\right \} $ and a meet-irreducible $R_{[y][x]}$ of the
join-semilattice $(\mathcal{R}_{A},\overline{\cup })$, (i.e. a
two-indifference-class total preorder having $y$ among its maxima and $x$
among its minima), such that $Top(\widehat{f}^{\partial maj}((R_{i})_{i\in
N}))\subseteq R_{[y][x]}$. But then, upper distributivity of $(\mathcal{R}%
_{A},\overline{\cup })$ entails that $\overline{\tbigcup \limits_{i\in T}}%
R_{i}\subseteq R_{[y][x]}$ for some $T\in \mathcal{W}^{maj}$ whence $%
R_{i}\subseteq R_{[y][x]}$ for each $i\in T\in \mathcal{W}^{maj}$, a
contradiction.

It is easily checked that $\widehat{f}^{\partial maj}$ is also $M_{\mathcal{X%
}}$-Neutral if $n:=|N|$ is odd . It follows that for any odd $n$ there
exists an Arrowian social welfare function on the \textit{full} domain of
total preorders on an arbitrary finite set which is anonymous, neutral,
idempotent (because Bi-Idempotence clearly implies Idempotence), satisfies a
monotonic independence property w.r.t. the meet-irreducible total preorders
(which are the co-atoms of the join-semilattice $(\mathcal{R}_{A},\subseteq
) $, i.e. the total preorders having just two indifference classes) \textit{%
and is strategy-proof} on any rich locally unimodal preference domain on $%
\mathcal{R}_{A}$. Therefore, $\widehat{f}^{\partial maj}$ is in particular a 
\textit{social welfare function that satisfies all the properties required
by Arrow's (Im)Possibility Theorem except for} the\textit{\ Independence of
Irrelevant Alternatives (IIA) condition}. \footnote{%
Recall that Arrow's IIA (in binary form) is a condition on social welfare
functions $f$ :$(\mathcal{R}_{A}^{T})^{N}\rightarrow \mathcal{R}_{A}^{T}$
defined as follows: for every $x$,$y\in A$ and any $R_{N}$ $,R_{N}^{\prime
}\in (\mathcal{R}_{A}^{T})^{N}$ such that $xR_{i}y$ if and only if $%
xR_{i}^{\prime }y$ for each $i\in N$, $xf(R_{N})y$ entails $%
xf((R_{N}^{\prime })y$.}\textit{\ }What is then the relationship between $M_{%
\mathcal{X}}$-Independence ($M_{\mathcal{X}}$-$I$) and IIA? It is quite
clear that under Idempotence $M_{\mathcal{X}}$-$I$ is definitely \textit{%
weaker} than IIA because, as a consequence of Proposition 1, the former is
consistent with Anonymity and Neutrality of an (Arrowian)
unanimity-respecting social welfare function while the latter is not.
Indeed, as established by Hansson (1969), IIA in combination with Anonymity
and Neutrality provides a characterization of the \textit{constant }social
welfare function having the \textit{universal indifference} relation $%
A\times A$ as its \textit{unique} value (hence in particular the former
combination of properties is inconsistent with Idempotence). In other terms,
strenghtening $M_{\mathcal{X}}$-Independence to IIA is \textit{just
impossible} for unanimity-respecting, anonymous and neutral Arrowian social
welfare functions.

\begin{remark}
(\textbf{The case of weak orders}) A \textbf{weak order}\textit{\ }on $A$ is
a binary relation $W$ that satisfy \textit{asymmetry (}$xWy$ entails $not$ $%
yWx$ for every $x,y\in A$) and negative transitivity ($not$ $xWy$ and $not$ $%
yWz$ entail $not$ $xWz$ for all $x,y,z\in A$). It is easily checked that the
partially ordered set $(\mathcal{W}_{A},\subseteq ^{\partial })$ of weak
orders on $A$ (where $W\subseteq ^{\partial }W^{\prime }$ if and only if $%
W^{\prime }\subseteq W$) is isomorphic to $(\mathcal{R}_{A},\subseteq )$
hence a median join-semilattice. It follows that versions of \ Corollary 1
and Proposition 1 also hold true for weak orders on a finite set.
\end{remark}

As it turns out, reconciling unanimity-respecting and strategy-proof
preference aggregation to IIA is however possible by moving away from the
domain of total preorders, towards some more comprehensive preference
domains. This observation brings us to a few other examples, to which we now
turn.\footnote{%
A \textit{further }relevant example is the median join-semilattice $\mathcal{%
R}_{A}^{T}\times \mathcal{P}(A)$, which is particularly convenient when it
comes to addressing squarely agenda-manipulation issues. Such semilattice
will be discussed in some detail elsewhere.}\medskip

\textbf{Other types of preference relations and preference aggregation
rules.\smallskip }

To begin with, let us consider the collections of \textit{generalized weak
tournaments} and of \textit{generalized strict tournaments}, namely the set
of all reflexive (respectively, irreflexive) connected relations on a finite
set, no matter if transitive or not.

\bigskip

\textbf{Example 2.} \textbf{Two isomorphic join-semilattices: the
join-semilattices of generalized weak tournaments and of generalized strict
tournaments on a finite set.}

Let $A$ be a nonempty finite set of alternative social states with $|A|=m$, $%
\mathcal{T}_{A}$ the set of all \textit{generalized weak} \textit{tournaments%
} namely \textit{reflexive} and \textit{connected} binary relations on $A$.
Let us define the join of two total relations on $A$ as their set-theoretic
union $\cup $. Then, by construction, $\mathcal{X}^{\prime }:=(\mathcal{T}%
_{A},\cup )$ is a join-semilattice, and its meet-irreducibles are the $%
m\cdot (m-1)$ total relations $T_{-xy}:=\left \{ (a,b)\in A^{2}:(a,b)\neq
(x,y)\right \} $ with $x,y\in A,$ $x\neq y$: namely, total relations whose
asymmetric components consist of some \textit{single} ordered pair $(x,y)$.
It can also immediately checked that such $T_{-xy}$ total relations are
indeed the \textit{co-atoms of }$(\mathcal{T}_{A},\cup )$. \textit{It turns
out that validity of the following claim can be easily established}\textbf{. 
}The join-semilattice $\mathcal{X}^{\prime \prime }:=$ $(\mathcal{T}%
_{A}^{\circ },\cup )$ is defined similarly on the set $\mathcal{T}%
_{A}^{\circ }$ of all \textit{generalized strict} \textit{tournaments},
namely \textit{irreflexive} and \textit{connected} binary relations on $A$.
It is easily checked that $\mathcal{X}^{\prime }$ and $\mathcal{X}^{\prime
\prime }$ are isomorphic join-lattices, with an isomorphism $\psi :\mathcal{T%
}_{A}\rightarrow \mathcal{T}_{A}^{\circ }$ between them being defined by the
rule $\mathbb{\psi }(T):=T\setminus \Delta _{A}$ for any $T\in \mathcal{T}%
_{A}$, with $\Delta _{A}:=\left \{ (x,x):x\in A\right \} $.

\begin{claim}
\textbf{\ }For any nonempty finite set $A$, the join-semilattices $(\mathcal{%
T}_{A},\cup )$ and $(\mathcal{T}_{A}^{\circ },\cup )$ are median.\footnote{%
The proof is available from the authors upon request.}
\end{claim}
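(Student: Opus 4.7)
The plan is to establish medianness for $(\mathcal{T}_A,\cup)$; the claim for $(\mathcal{T}_A^{\circ},\cup)$ then follows at once, since being median is an order-theoretic property and the map $\psi$ exhibited in the excerpt is a join-semilattice isomorphism.

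The key preliminary observation is a concrete description of the (partial) meet in $(\mathcal{T}_A,\cup)$: since the order is set inclusion, every common lower bound of $T_1,T_2\in\mathcal{T}_A$ is a reflexive connected relation contained in $T_1\cap T_2$, so $T_1\wedge T_2$ exists if and only if $T_1\cap T_2$ is connected, in which case $T_1\wedge T_2=T_1\cap T_2$. I will use this description repeatedly without further comment.

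For upper distributivity, I fix $x,y,z\in\mathcal{T}_A$ and a common lower bound $u$ of $\{x,y,z\}$. Because $u$ is connected and is contained in each of $x\cap y$, $x\cap z$, $y\cap z$, and $x\cup(y\cap z)$, all meets appearing in the identity $x\vee(y\wedge z)=(x\vee y)\wedge(x\vee z)$ exist and coincide with the corresponding set-theoretic intersections (while joins are set-theoretic unions by construction). The identity then collapses to the elementary distributive law $x\cup(y\cap z)=(x\cup y)\cap(x\cup z)$ and needs no further work.

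The only genuinely nontrivial step is co-coronation. Suppose $x\cap y$, $y\cap z$, $x\cap z$ are each connected; I must show the same for $x\cap y\cap z$. The natural reduction is pair-by-pair: for each $\{a,b\}\subseteq A$ with $a\neq b$ and any $T\in\mathcal{T}_A$, set $d_T(\{a,b\}):=T\cap\{(a,b),(b,a)\}$, a nonempty subset of a two-element set. Since intersection commutes with this restriction, the hypothesis becomes the statement that the three sets $d_x(\{a,b\})$, $d_y(\{a,b\})$, $d_z(\{a,b\})$ have pairwise nonempty intersection, and the desired conclusion becomes that their threefold intersection is nonempty. This is a tiny Helly-type check on the two-element set: if any of the three sets is a singleton, pairwise nonempty intersection forces its unique element to lie in the other two; otherwise all three equal $\{(a,b),(b,a)\}$ and there is nothing to prove. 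Applying this for every pair $\{a,b\}$ simultaneously yields connectedness of $x\cap y\cap z$. The two-element Helly step is the only place where attention is required; the rest of the argument is set-theoretic bookkeeping around the meet-existence criterion.
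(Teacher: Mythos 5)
Your argument is correct and complete. Note that the paper itself prints no proof of this claim (the footnote merely states that one is available from the authors on request), so there is no published argument to compare yours against; what you have written fills that gap. The two load-bearing points are exactly the right ones: first, the observation that in $(\mathcal{T}_A,\subseteq)$ the meet of two elements exists precisely when their set-theoretic intersection is connected (and then equals it), since any relation missing both $(a,b)$ and $(b,a)$ for some pair admits no connected subrelation; second, the reduction of both axioms to this description. Upper distributivity then really is just the distributive law for $\cup$ and $\cap$ inside $\uparrow u$, where all the needed intersections contain the connected relation $u$ (or $x$) and hence exist, and co-coronation localizes, pair by unordered pair $\{a,b\}$, to the Helly property for nonempty subsets of the two-element set $\{(a,b),(b,a)\}$, which you verify correctly. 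The transfer to $(\mathcal{T}_A^{\circ},\cup)$ via the isomorphism $\psi(T)=T\setminus\Delta_A$ is legitimate, since the paper has already established that $\psi$ is a join-semilattice isomorphism and medianness is an isomorphism-invariant. No gaps.
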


Clearly, versions of Propositions 2 and 3 (and of Claim 2) also hold true
for both generalized weak tournaments and generalized strict tournaments.

Next, we proceed to remove the connectedness requirement as well.\medskip

\textbf{Example 3.} \textbf{Two isomorphic lattices: the lattices of
reflexive binary relations and of irreflexive binary relations on a finite
set.}

Clearly enough, any distributive lattice $(X,\vee ,\wedge )$ also provides
an example of a median join-semilattice.

In particular, let $A$ be a nonempty finite set of alternative social
states, $B_{A}^{r}$ the set of all reflexive binary relations on $A$, $%
(B_{A}^{r},\subseteq )$ the set-inclusion poset on $B_{A}^{r}$. Let us then
define the join $\vee $ and meet $\wedge $ of two reflexive binary relations
on $A$ as their set-theoretic union $\cup $ and intersection $\cap $,
respectively. Hence, $\mathcal{X}^{\prime }:=(B_{A}^{r},\cup ,\cap )$ is
indeed, by construction, a (bounded) \textit{distributive lattice}. It
follows that $\cup $-closedness of $B_{A}^{r}$ and both upper-distributivity
and co-coronation trivially hold in $\mathcal{X}^{\prime }$, i.e. $%
(B_{A}^{r},\cup )$ is in particular a \textit{median join-semilattice} whose
median $\mu ^{\prime }$ is precisely the median of the distributive lattice $%
(B_{A}^{r},\cup ,\cap )$. Namely, for any $R_{1},R_{2},R_{3}\in B_{A}^{r}$,%
\begin{equation*}
\mu ^{\prime \prime }(R_{1},R_{2},R_{3})=(R_{1}\cup R_{2})\cap (R_{2}\cup
R_{3})\cap (R_{3}\cup R_{1})=(R_{1}\cap R_{2})\cup (R_{2}\cap R_{3})\cup
(R_{3}\cap R_{1}).
\end{equation*}

Moreover, it can be easily shown (and left to the reader to check) that%
\begin{equation*}
M_{\mathcal{X}^{\prime }}=\mathcal{C}_{\mathcal{X}^{\prime }}=\left \{
A^{2}\setminus \left \{ (a,b)\right \} :a,b\in A,a\neq b\right \} \text{,}
\end{equation*}
and%
\begin{equation*}
J_{\mathcal{X}^{\prime }}=\mathcal{A}_{\mathcal{X}^{\prime }}=\left \{
\Delta _{A}\cup \left \{ (a,b)\right \} :a,b\in A,a\neq b\right \} \text{,}
\end{equation*}%
where $\Delta _{A}:=\left \{ (a,a):a\in A\right \} $.

Hence, $\mathcal{X}^{\prime }$ is in particular a \textit{co-atomistic }and 
\textit{atomistic} lattice.

It should be emphasized that the set of all total preorders on $A$ is
clearly a \textit{subset, }but \textit{not} a sub-join semilattice of the
join-semilattice reduct $(B_{A}^{r},\cup )$ of the lattice $(B_{A}^{r},\cup
,\cap )$, since the union of two total preorders may \textit{not} be
transitive.\footnote{%
To see this, consider e.g. $A=\left \{ a,b,c,d\right \} $,\ and the linear
orders, $R_{1}:=abcd$, $R_{2}:=dbca$ (written according to the usual
`decreasing' notation). Now, $R_{1}\cup R_{2}=\left \{ (x,x):x\in A\right \}
\cup \left \{
(a,b),(b,a),(a,c),(c,a),(a,d),(d,a),(b,c),(b,d),(d,b),(c,d),(d,c)\right \} $
which is not transitive since $\left \{ (c,a),(a,b)\right \} \subseteq
R_{1}\cup R_{2}$ but $(c,b)\notin R_{1}\cup R_{2}$).}

Let us turn now to the set $B_{A}^{ir}$ the set of all \textit{irreflexive}
binary relations on $A$ (namely, the binary relations $R\subseteq A^{2}$
such that $(x,x)\notin R$ for every $x\in A$) and to $(B_{A}^{ir},\subseteq
) $, namely the set-inclusion poset on $B_{A}^{ir}$. Since both the
intersection and the union of two irreflexive relations are also
irreflexive, $\mathcal{X}^{\prime \prime }:=(B_{A}^{ir},\cup ,\cap )$ is
indeed, by construction, a (bounded) \textit{distributive lattice}.
Moreover, $(B_{A}^{r},\cup ,\cap )$ and $(B_{A}^{ir},\cup ,\cap )$ are
isomorphic lattices, an obvious isomorphism $\varphi :B_{A}^{r}\rightarrow
B_{A}^{ir}$ between them being defined by the rule $\varphi (R):=R\setminus
\Delta _{A}$.

Let us now introduce the strenghtening of \textit{Monotonic }$M_{\mathcal{X}%
} $\textit{-Independence }which results from substituting IIA for $M_{%
\mathcal{X}}$\textit{-}Independence.\smallskip

\textbf{Definition 10.} $\left( \emph{Monotonic\ IIA}\right) $\textbf{: }Let 
$A$ be a nonempty finite set of alternative social states, $\mathcal{B}%
_{A}\in $ $\left \{ \mathcal{B}_{A}^{r},\mathcal{B}_{A}^{ir}\right \} $, $%
\mathcal{X}^{\prime }:=(\mathcal{B}_{A},\cup ,\cap )\mathcal{\ }$the
(bounded) distributive lattice induced on $\mathcal{B}_{A}$ by $\cup $ and $%
\cap $ , and $f:(\mathcal{B}_{A})^{N}\rightarrow \mathcal{B}_{A}$ an
aggregation rule for $(N,\mathcal{B}_{A})$. Then, $f$ is \textbf{%
monotonically IIA} if, \ for all $R_{N},R_{N}^{\prime }\in (\mathcal{B}%
_{A})^{N}$ and all $(u,v)\in A^{2}$: if $\left \{ i\in N:uR_{i}v\right \}
\subseteq \left \{ i\in N:uR_{i}v\right \} $ then $u$ $f(x_{N})v$ implies $%
uf(y_{N})v$.\smallskip

We are now ready to show that when the relevant join-semilattice is (the
join-reduct of) $\mathcal{X}^{\prime }\in \left \{ (\mathcal{B}_{A}^{r},\cup
,\cap ),(\mathcal{B}_{A}^{ir},\cup ,\cap )\right \} $ we can rely on the
full force of a counterpart of Theorem 1 for bounded distributive lattices
(see e.g. Savaglio, Vannucci (2019)) to obtain the following
result.\smallskip

\textbf{Proposition 4.} Let $A$ be a nonempty finite set of alternative
social states, $\mathcal{B}_{A}\in $ $\left \{ \mathcal{B}_{A}^{r},\mathcal{B%
}_{A}^{ir}\right \} $, $\mathcal{X}^{\prime }:=(\mathcal{B}_{A},\cup ,\cap )%
\mathcal{\ }$the (bounded) distributive lattice induced on $\mathcal{B}_{A}$
by $\cup $ and $\cap $, $\mu ^{\prime }$ its median ternary operation and $%
B_{\mu ^{\prime }}$ the corresponding betweenness as previously defined, and 
$f:(\mathcal{B}_{A})^{N}\rightarrow \mathcal{B}_{A}$ an aggregation rule for 
$(N,\mathcal{B}_{A})$. Then, the following statements are equivalent:

$(i)$\ $f$ is strategy-proof on $D^{N}$ for every rich domain $D\subseteq
U_{B_{\mu ^{\prime \prime }}}$ of locally unimodal preorders w.r.t. $B_{\mu
^{\prime }}$ on $\mathcal{B}_{A}$;

$(ii)$ $f$ is $B_{\mu ^{\prime }}$-monotonic;

$(iii)$ $f$ is monotonically $M_{\mathcal{X}^{^{\prime }}}$-independent;

$(iv)$ $f$ is monotonically $J_{\mathcal{X}^{^{\prime }}}$-independent;

$(v)$ $f$ is monotonically IIA;

$(vi)$ there exists an order filter $\mathcal{F}$ of $(\mathcal{P}%
(N),\subseteq )$ and a family $\left \{ R_{S}\in \mathcal{B}_{A}:S\in 
\mathcal{F}\right \} $ of relations in $\mathcal{B}_{A}$ such that $%
f(R_{N})=\dbigcap \limits_{S\in \mathcal{F}}((\cup _{i\in S}R_{i})\cup
R_{S}) $ for all $R_{N}\in (\mathcal{B}_{A}^{r})^{N}$;

$(vii)$ there exists an order filter $\mathcal{F}$ of $(\mathcal{P}%
(N),\subseteq )$ and a family $\left \{ R_{S}\in \mathcal{B}_{A}:S\in 
\mathcal{F}\right \} $ of relations in $\mathcal{B}_{A}$ such that $%
f(R_{N})=\dbigcup \limits_{S\in \mathcal{F}}((\cap _{i\in S}R_{i})\cap
R_{S}) $ for all $R_{N}\in (\mathcal{B}_{A})^{N}$.\footnote{%
We omit the proof, which relies on a result mentioned in Davey, Priestley
(1990), p. 178, and is available from the authors upon request.}\smallskip

It goes without saying that the strategy-proof aggregation rules for $(N,%
\mathcal{B}_{A})$ characterized above (with $\mathcal{B}_{A}\in $ $\left \{ 
\mathcal{B}_{A}^{r},\mathcal{B}_{A}^{ir}\right \} $) comprise counterparts
to inclusive quorum systems, quota rules and all the other aggregation rules
for total preorders mentioned above. The co-majority rule $f_{r}^{\partial
maj}:(\mathcal{B}_{A})^{N}\rightarrow \mathcal{B}_{A}$ is defined by the
identity $f_{r}^{\partial maj}(R_{N})=\dbigcap \limits_{S\in W^{maj}}(\cup
_{i\in S}R_{i})$ for each $R_{N}\in (\mathcal{B}_{A})^{N}$, which is
obtained from the general formula under statement (vi) of the previous
proposition by setting $\mathcal{F}=W^{maj}:=\left \{ S\subseteq N:|S|\geq
\lfloor \frac{|N|+2}{2}\rfloor \right \} $ and $R_{S}=\Delta _{A}$ for each $%
S\in W^{maj}$.

But new possibilities arise here. To begin with, a version of the \textit{%
majority rule} $f_{r}^{maj}:(\mathcal{B}_{A})^{N}\rightarrow \mathcal{B}_{A}$
is now well-defined by the identity $f_{r}^{maj}(R_{N})=\dbigcup
\limits_{S\in W^{maj}}(\cap _{i\in S}R_{i})$ for each $R_{N}\in (\mathcal{B}%
_{A})^{N}$, which is obtained from the general formula under statement (vii)
of the previous proposition by setting $\mathcal{F}=W^{maj}$ and $%
R_{S}=A\times A$ for each $S\in W^{maj}$.

Of course, the outputs of $f_{r}^{maj}$(and $f_{r}^{\partial maj}$, or for
that matter of any idempotent aggregation rule for $(N,\mathcal{B}_{A})$)
may well be nontransitive or even \textit{intransitive }(i.e. include cycles
with asymmetric components). To see this, just consider a profile consisting
of identical \textit{cyclic }reflexive relations. But then, it is easily
seen (and\textit{\ }left to the reader to check) that the aggregation rules
for $(N,\mathcal{B}_{A})$ resulting from idempotent ones by just \textit{%
removing cycles }from their outputs through a \textit{minimal} number of
pair-deletions are also $B_{\mu ^{\prime \prime }}$-monotonic (though, of
course, not idempotent but rather \textit{weakly idempotent }in the
following sense: for any profile $R_{N}\in $ $(\mathcal{B}_{A})^{N}$ such
that $R_{i}=R_{j}=R$ for all $i,j\in N$, $f(R_{N})\subseteq R$). Thus, here
is a new (sub)class of interesting strategy-proof aggregation rules for $(N,%
\mathcal{B}_{A})$ whose output for any profile of total preorders is indeed
a total preorder (let us call them \textit{minimal} \textit{monotonic
retracts }just for ease of reference).

Furthermore, for an odd-sized $N$ the majority rule for $(N,\mathcal{B}_{A})$
turns out to coincide with the co-majority rule. This is made precise by the
following:\smallskip

\textbf{Proposition 5.} Let $A$ be a nonempty finite set of alternative
social states, $\mathcal{B}_{A}\in \left \{ \mathcal{B}_{A}^{r},\mathcal{B}%
_{A}^{ir}\right \} $, $\mathcal{X}^{\prime }:=(\mathcal{B}_{A}^{r},\cup
,\cap )$ the (bounded) distributive lattice induced on $\mathcal{B}_{A}$ by $%
\cup $ and $\cap $, $\mu ^{\prime }$ its median ternary operation and $%
B_{\mu ^{\prime }}$ the corresponding betweenness relation, $N$ a finite set
such that $|N|$ is an \textit{odd number, }and $f:(\mathcal{B}%
_{A})^{N}\rightarrow \mathcal{B}_{A}$ an aggregation rule for $(N,\mathcal{B}%
_{A})$. Then, the following statements are equivalent:

$(i)$\ $f$\ satisfies Anonymity and Bi-Idempotence, and is strategy-proof on 
$D^{N}$ for every rich domain $D\subseteq U_{B_{\mu ^{\prime \prime }}}$ of
locally unimodal preorders w.r.t . $B_{\mu ^{\prime }}$ on $\mathcal{B}_{A}$;

$(ii)$ $f=f_{r}^{maj}=f_{r}^{\partial maj}$;

$(iii)$ $f=f_{\preceq }^{CK^{r}}=f_{\preceq ^{\prime }}^{CK^{r}}$i.e. the
generalized Condorcet-Kemeny aggregation rule for $(N,\mathcal{B}_{A})$ for
any pair of linear orders $\preceq ,\preceq ^{\prime }$ on $\mathcal{B}%
_{A}^{r}$.

\begin{proof}
Immediate from Corollary 1 and Example 3.
\end{proof}

Thus, when $N$ has an odd size, generalized Condorcet-Kemeny aggregation
rules for $(N,\mathcal{R}_{A}^{T})$, $(N,\mathcal{T}_{A})$, $(N,\mathcal{B}%
_{A}^{r})$ and $(N,\mathcal{B}_{A}^{ir})$ are amenable to the same sort of
simple characterization via Anonymity, Bi-Idempotence and Strategy-Proofness
on certain rich single-peaked domains. Moreover, in both cases \textit{%
strict Condorcet-Kemeny rules }are also available.

\section{Concluding remarks}

The results of the present work imply that, at least for an \textit{odd-sized%
} population of agents, even anonymous and weakly neutral social welfare
functions on the \textit{full domain of total preference preorders} on a
finite set \textit{do exist}, and are indeed \textit{strategy-proof} on
suitably defined single-peaked domains of `preferences on preferences' (i.e.
arbitrary rich locally unimodal domains).

Arguably, such social welfare functions may also be regarded as a \textit{%
positive} solution to a suitably reformulated version of the classic
Arrowian preference aggregation problem. Namely, the focus is restricted to 
\textit{strategic} as opposed to \textit{structural }manipulation, and the
Arrowian Independence condition IIA is accordingly replaced with a most
`natural' and milder independence requirement tightly related to the
intrinsic order-theoretic structure of $\mathcal{R}_{A}^{T}$.\footnote{%
Indeed, consider $\widehat{f}^{\partial maj}$ for $(N,A)$ with $|N|=|A|=3$,
and profiles $R_{N},R_{N}^{\prime }$ of linear orders with (under the usual
permutation-based notation for linear orders, and square-bracket notation to
denote indifference):
\par
$R_{1}=R_{1}^{\prime }=xyz$; \  \  \ $R_{2}=R_{2}^{\prime }=yzx$; \  \  \ $%
R_{3}=zxy$, $R_{3}^{\prime }=xzy$.
\par
Note that $R_{3}$ and $R_{3}^{\prime }$ are \textit{adjacent}. Nevertheless,
as it is easily checked, $\widehat{f}^{\partial maj}(R_{N})=[xyz]$, while $%
\widehat{f}^{\partial maj}(R_{N}^{\prime })=x[yz].$%
\par
It is then immediately seen that the co-majority rule (which clearly
satisfies $M_{\mathcal{X}}$-Independence with respect to $\mathcal{X}=(%
\mathcal{R}_{A}^{T},\overline{\cup })$) does \textit{not} satisfy IIA with
respect to $\mathcal{X}$. In\ fact, $R_{i}|\left \{ x,y\right \}
=R_{i}^{\prime }|\left \{ x,y\right \} $ for every $i\in N$. Yet, $yf(R_{N})x$
while \textit{not }$yf(R_{N}^{\prime })x$.} In other words, we have here a 
\textit{first} explicit escape route from Arrow's `impossibility' theorem on
preference aggregation, which relies on retention of the `transitivity plus
totality' format requirement\footnote{%
Notice that the version of Proposition 3 that applies to generalized
tournaments and previously mentioned in this work also implies that dropping
transitivity and retaining just totality for both individual and social
preferences is still another way out of Arrowian impossibility results. It
goes without saying that such an escape route would leave ample scope for
manipulation activities through agenda-structure control.} for preference
relations as combined with a considerable \textit{weakening }of IIA that
relies on the (semi)latticial structure of the set of total preorders. Such
a weakening is totally unrelated to other sorts of weakenings of IIA
previously proposed in the literature including several versions of \textit{%
Positionalist Independence,} as\textit{\ }introduced and discussed by
Hansson (1973) with no reference whatsoever to nonmanipulability issues. One
of the strongest of them, labelled as Strong Positionalist Independence
(SPI) by Hansson himself, requires invariance of aggregate preference
between any two alternatives $x,y$ for any pair of preference profiles whose
restrictions to $\left \{ x,y\right \} $ are identical \textit{whenever for
every agent/voter the supports of the respective closed preference intervals
having }$x$ \textit{and }$y$\textit{\ as their extrema are also identical. }%
SPI has been recently rediscovered -and relabeled as Modified IIA- by Maskin
(2020). Maskin motivates it in terms of resistance to certain sorts of `vote
splitting' effects, hence broadly speaking with reference to manipulation
issues, including strategic manipulation. Notice, however, that what is at
stake in that proposal is strategy-proofness of the `\textit{maximizing'
social choice function }induced by a certain social welfare function (as
opposed to strategy-proofness of the social welfare function itself).

In a similar vein, another weakening of IIA that is even stronger than SPI
has been proposed by Saari under the label `Intensity form of IIA' (IIIA).
IIIA requires invariance of aggregate preference between any two
alternatives $x,y$ for any pair of preference profiles such that\textit{\
for every agent/voter the rank (or score) difference between }$x$ \textit{%
and }$y$\textit{\ is left unchanged }from one profile to the other\textit{\ }%
(see Saari (1995)). Arguably, Saari's IIIA can also be regarded as a
formalization of the criticism of IIA originally advanced by Dahl (1956)
with his advocacy of aggregation rules based on intensity of individual
preferences. Notice that IIIA is indeed satisfied by some positional
aggregation rules such as the Borda Count. Moreover, both IIIA and M$_{%
\mathcal{X}}$-Independence are also satisfied by majority judgment as
discussed in Vannucci (2019), which provides an alternative approach to
include intensity of preferences while preserving strategy-proofness. A
further weakening of IIA in a quite different vein is due to Huang (2014),
under the label \textit{Weak Arrow's Independence }(WIIA). In plain words, a
social welfare function $f$ satisfies WIIA if, for any pair $R_{N}$,$%
R_{N^{\prime }\text{ }}$ of profiles of total preorders and any pair $x$,$y$
of alternatives such that the preferences between $x$ and $y$ of every agent 
$i$ in $N$ are the same in $R_{N}$ and $R_{N^{\prime }}$, the following
condition holds: if $x$ is \textit{strictly} \textit{preferred} to $y$
according to social preference $f(R_{N})$ then $x$ is \textit{preferred }%
(i.e. either strictly preferred or indifferent) to $y$ according to social
preference $f(R_{N}^{\prime })$. Notice the main difference between WIIA and
virtually all of the other weakenings of IIA considered in the present work:
while the other weakenings \textit{strenghten the hypothetical clause} of
IIA and leave its consequent unaltered, WIIA keeps the hypothetical clause
of IIA unaltered and \textit{weakens its consequent}.

As mentioned above, even at a first glance one conspicuous difference
between $M_{\mathcal{X}}$-Independence and SPI (or IIIA and WIIA) stands out
immediately: the former relies heavily on the structure of the outcome set,
while SPI, IIIA and WIIA\ only impinge upon the relevant preference
profiles, completely disregarding any specific feature/structure of the
relevant outcome set (namely the set of all total preorders of the set $A$
of basic alternatives). The $M_{\mathcal{X}}$-Independence condition makes
most sense if (to the contrary of IIA) it is divorced from agenda
manipulation issues, and relies in fact on a \textit{fixed agenda} setting.
From a mechanism-design perspective, it amounts to a sort of \textit{%
divide-and-conquer }approach to collective choice problems: the
agenda-formation process and the underlying protocols (if any) are to be
taken for granted, and thereby ignored.\footnote{%
That is not meant to imply that disentangling structural and strategic
manipulation is always easy or indeed possible in actual practice. For
instance, if alternative outcomes are candidates for an appointment or in a
political election then strategic candidacy is virtually always possible.
But strategic candidacy may be regarded precisely as a structural
manipulation of the aggregation rule which is channelled through a forced
change of \textit{available} strategies. Anyway, it is worth noticing that
under social choice functions which admit Condorcet winners at any profile
of their domain \textit{and select} \textit{them,} no agent can benefit by
giving up her own candidacy (see Dutta, Jackson, Le Breton (2001),
Proposition 1).}

\section{Appendix A}

\begin{theorem}
Let $\mathcal{X}=(X,\leqslant )$ be a finite \textit{median
join-semilattice, }$B_{\mu }\mathcal{\ }$its median-induced betweenness, and 
$f:X^{N}\rightarrow X$ an aggregation rule for $(N,X)$. Then, the following
statements are equivalent:

$(i)$ $f$ is strategy-proof on $D^{N}$ for any rich domain $D\subseteq
U_{B_{\mu }}$ of locally unimodal preorders w.r.t. $B_{\mu }$ on $X$;

$(ii)$ $f$ is $B_{\mu }$-monotonic\textbf{;}

$(iii)$ $f$ is monotonically $M_{\mathcal{X}}$-independent.
\end{theorem}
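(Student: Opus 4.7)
The plan is to prove the cycle $(i) \Rightarrow (ii) \Rightarrow (iii) \Rightarrow (ii) \Rightarrow (i)$, which suffices since $(ii)$ appears on both halves of the chain. The passage $(i) \Leftrightarrow (ii)$ is where richness of the domain and local unimodality enter, while $(ii) \Leftrightarrow (iii)$ is the order-theoretic core, translating the dynamic betweenness condition on $f$ into its static coordinate-wise description via the meet-irreducibles of $\mathcal{X}$.

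For $(i) \Rightarrow (ii)$, I would fix $i$, a profile $(x_j)_{j \in N}$, a deviation $y_i$, and set $x^\ast := f(y_i, (x_j)_{j \neq i})$. Richness supplies a $B_\mu$-\textbf{lu} preorder $\succcurlyeq_i$ with $top(\succcurlyeq_i) = x_i$ and $UC(\succcurlyeq_i, x^\ast) = I^\mu(x_i, x^\ast)$, completed by any $\succcurlyeq_j$ peaked at $x_j$ for $j \neq i$. Strategy-proofness rules out a gain for $i$ from reporting $y_i$, so $f((x_j)_{j \in N}) \succcurlyeq_i x^\ast$, i.e.\ $f((x_j)_{j \in N}) \in I^\mu(x_i, x^\ast)$. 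Conversely, $(ii) \Rightarrow (i)$ is immediate: if $f((x_j)_j) \in I^\mu(top(\succcurlyeq_i), f(y_i, (x_j)_{j \neq i}))$, local unimodality of $\succcurlyeq_i$ precludes $f(y_i, (x_j)_{j \neq i}) \succ_i f((x_j)_j)$.

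For $(ii) \Leftrightarrow (iii)$, I would first record the coordinate-wise characterization of $I^\mu$ valid in any finite upper distributive join-semilattice (consequence of $x = \wedge M(x)$ and Claim 1): $c \in I^\mu(a,b)$ if and only if, for every $m \in M_{\mathcal{X}}$,
\[
(a \leqslant m \text{ and } b \leqslant m) \Rightarrow c \leqslant m, \qquad c \leqslant m \Rightarrow (a \leqslant m \text{ or } b \leqslant m).
\]
With this in hand, $(iii) \Rightarrow (ii)$ is a case analysis applied to $a = x_i$, $b = f(y_i, x_{-i})$, $c = f(x_N)$: comparing $N_m(x_N)$ with $N_m((y_i, x_{-i}))$ according to whether $x_i \leqslant m$ and $y_i \leqslant m$, each of the four resulting cases reduces to one application of monotonic $M_{\mathcal{X}}$-independence.

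The hard part is $(ii) \Rightarrow (iii)$. Reducing by induction on Hamming distance to a single-coordinate change $y_N = (y_1, x_{-1})$, I set $a = x_1$, $b = y_1$, $c = f(x_N) \leqslant m$, $d = f(y_N)$; $B_\mu$-monotonicity yields $c = \mu(a,d,c) = (a \vee d) \wedge (a \vee c) \wedge (c \vee d)$ and $d \in I^\mu(b,c)$, whence $d \leqslant b \vee c$. If $b \leqslant m$, then $d \leqslant b \vee c \leqslant m$. If instead $b \not\leqslant m$, the hypothesis $N_m(x_N) \subseteq N_m(y_N)$ forces $a \not\leqslant m$; applying Claim 1 to $\{a \vee d, a \vee c, c \vee d\}$, whose meet is $c \leqslant m$, extracts some member $\leqslant m$, and since both $a \vee d$ and $a \vee c$ dominate $a \not\leqslant m$, the surviving member must be $c \vee d$, whence $d \leqslant m$. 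This final case is the delicate point of the entire argument: without the upper distributivity built into the definition of a median join-semilattice, invoked through Claim 1, the betweenness relation alone would not suffice to recover monotone $M_{\mathcal{X}}$-independence.
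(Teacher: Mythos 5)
Three of your four implications are sound and track the paper's own argument closely: your coordinate-wise characterization of $I^{\mu }(a,b)$ via the meet-irreducibles is a clean packaging of what the paper does inline for $(ii)\Leftrightarrow (iii)$, and your single-coordinate reduction with the two cases ($b\leqslant m$ using $d\leqslant b\vee c$, versus $a\nleqslant m$ plus the meet-irreducible extraction of Claim 1) is essentially identical to the paper's proof of $(ii)\Rightarrow (iii)$. The one cosmetic point there is that Claim 1 as stated covers $\wedge Y<m$; when $\wedge Y=m$ you need meet-irreducibility of $m$ directly, but that is immediate.

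The genuine gap is in $(i)\Rightarrow (ii)$. Preferences here are preorders, not total preorders, and richness only pins down the top and the single upper contour set $UC(\succcurlyeq _{i},x^{\ast })=I^{\mu }(x_{i},x^{\ast })$; it says nothing about how $x^{\ast }$ compares with outcomes outside that set. Strategy-proofness gives you \emph{not} $x^{\ast }\succ _{i}f(x_{N})$, and from this you infer $f(x_{N})\succcurlyeq _{i}x^{\ast }$. That inference fails when $f(x_{N})$ and $x^{\ast }$ are $\succcurlyeq _{i}$-incomparable, which a richness witness permits: the preorder with $x_{i}$ strictly on top, all of $I^{\mu }(x_{i},x^{\ast })\setminus \left \{ x_{i}\right \} $ mutually indifferent below it, and everything else incomparable to that middle class is locally unimodal and satisfies $UC(\succcurlyeq _{i},x^{\ast })=I^{\mu }(x_{i},x^{\ast })$, yet under it no deviation is profitable even when $f(x_{N})\notin I^{\mu }(x_{i},x^{\ast })$; a rich domain built from such witnesses would not detect a failure of $B_{\mu }$-monotonicity. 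The repair --- which is where the real work of this direction lies, and is what the paper actually does --- is to instantiate $(i)$ at the rich domain $U_{B_{\mu }}$ and use a \emph{total} three-indifference-class witness (top $\left \{ x_{i}\right \} $, middle $I^{\mu }(x_{i},x^{\ast })\setminus \left \{ x_{i}\right \} $, bottom the complement of the interval), so that everything outside the interval is strictly below $x^{\ast }$. Verifying that this constructed preorder is locally unimodal is not free: one must show that $v\in I^{\mu }(x_{i},y)$ and $y\in I^{\mu }(x_{i},x^{\ast })$ imply $v\in I^{\mu }(x_{i},x^{\ast })$, which the paper derives from the median identity $(\mu _{2})$. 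Your appeal to the bare richness witness elides exactly this step.
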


\begin{proof}
$(i)\Longrightarrow (ii)$ By contraposition. Let us assume that $%
f:X^{N}\rightarrow X$ is \textit{not }$B_{\mu }$-monotonic. Thus, there
exist $i\in N$, $x_{i}^{\prime }\in X$ and $x_{N}=(x_{i})_{i\in N}\in X^{N}$
such that $f(x_{N})\notin \lbrack x_{i},f(x_{i}^{\prime },x_{N\backslash
\left \{ i\right \} })]$. Then, consider a preorder $\succcurlyeq ^{\ast }$
on $X$ defined as follows: $x_{i}=top(\succcurlyeq ^{\ast })$ and for all $%
y,z\in X\backslash \left \{ x_{i}\right \} $ , $y\succcurlyeq ^{\ast }z$ if
and only if $(a)$ $\left \{ y,z\right \} \subseteq \lbrack
x_{i},f(x_{i}^{\prime },x_{N\backslash \left \{ i\right \} })]\backslash
\left \{ x_{i}\right \} $ or $(b)$ $y\in \lbrack x_{i},f(x_{i}^{\prime
},x_{N\backslash \left \{ i\right \} })]\backslash \left \{ x_{i}\right \} $
and $z\notin \lbrack x_{i},f(x_{i}^{\prime },x_{N\backslash \left \{
i\right
\} })] $ or $(c)$ $y\nparallel z$, $y\notin \lbrack
x_{i},f(x_{i}^{\prime },x_{N\backslash \left \{ i\right \} })]$ and $z\notin
\lbrack x_{i},f(x_{i}^{\prime },x_{N\backslash \left \{ i\right \} })]$.
Clearly, by construction, $\succcurlyeq ^{\ast }$ consists of three
indifference classes with $\left \{ x_{i}\right \} $, $[x_{i},f(x_{i}^{%
\prime },x_{N\backslash \left \{ i\right \} })]\backslash \left \{
x_{i}\right \} $ and $X\backslash \lbrack x_{i},f(x_{i}^{\prime
},x_{N\backslash \left \{ i\right \} })]$ as top, medium and bottom
indifference classes, respectively. Now, observe that $\succcurlyeq ^{\ast
}\in U_{B_{\mu }}$(which is by definition a rich locally unimodal domain
w.r.t. $B_{\mu }$). To check that such a statement holds true, take any $%
y,z,v\in X$ such that $y\neq z$ and $v\in \lbrack y,z]$, i.e. $\mu (y,v,z)=v$
(if $y=z$, then $v=y=z$ and there is in fact nothing to prove). If $\left \{
y,z\right \} \subseteq \lbrack x_{i},f(x_{i}^{\prime },x_{N\backslash
\left
\{ i\right \} })]$, then by definition $\mu (x_{i},f(x_{i}^{\prime
},x_{N\backslash \left \{ i\right \} }),y)=y$ and $\mu
(x_{i},f(x_{i}^{\prime },x_{N\backslash \left \{ i\right \} }),z)=z$.\ Thus,
by property $(\mu _{2})$ of $\mu $,%
\begin{equation*}
\mu (\mu (x_{i},f(x_{i}^{\prime },x_{N\backslash \left \{ i\right \}
}),y),\mu (x_{i},f(x_{i}^{\prime },x_{N\backslash \left \{ i\right \}
}),z),v)=
\end{equation*}%
\begin{equation*}
=\mu (\mu (y,z,v),x_{i},f(x_{i}^{\prime },x_{N\backslash \left \{ i\right \}
})\text{,}
\end{equation*}
whence%
\begin{equation*}
\mu (\mu (x_{i},f(x_{i}^{\prime },x_{N\backslash \left \{ i\right \}
}),y),\mu (x_{i},f(x_{i}^{\prime },x_{N\backslash \left \{ i\right \}
}),z),v)=\mu (y,z,v)=v
\end{equation*}
implies%
\begin{equation*}
\mu (\mu (y,z,v),x_{i},f(x_{i}^{\prime },x_{N\backslash \left \{ i\right \}
})=\mu (y,z,v)=v\text{,}
\end{equation*}
i.e. $v\in \lbrack x_{i},f(x_{i}^{\prime },x_{N\backslash \left \{
i\right
\} })]$.

Clearly, $\left \{ y,z\right \} \neq \left \{ x_{i}\right \} $ since $y\neq
z $. Now, assume without loss of generality that $y\neq x_{i}$ : thus $%
v\succcurlyeq ^{\ast }y$ by definition of $\succcurlyeq ^{\ast }$. If on the
contrary $\left \{ y,z\right \} \cap (X\backslash \lbrack
x_{i},f(x_{i}^{\prime },x_{N\backslash \left \{ i\right \} })])\neq
\varnothing $, then clearly by definition of $\succcurlyeq ^{\ast }$ there
exists $w\in \left \{ y,z\right \} $ such that $v\succcurlyeq ^{\ast }w$.
Thus, $\succcurlyeq ^{\ast }\in $ $U_{B_{\mu }}$ as claimed. Also, by
assumption $f(x_{N})\in X\backslash \lbrack x_{i},f(x_{i}^{\prime
},x_{N\backslash \left \{ i\right \} })]$ whence by construction $%
f(x_{i}^{\prime },x_{N\backslash \left \{ i\right \} })\succ ^{\ast
}f(x_{N}) $. But then, $f$ is \textit{not }strategy-proof on $U_{B_{\mu
}}^{N}$.\smallskip

$(ii)\Longrightarrow (i)$ Conversely, let $f$ be $B_{\mu }$-monotonic. Now,
consider any $\mathbf{\succcurlyeq }=(\succcurlyeq _{j})_{j\in N}\in
U_{B_{\mu }}^{N}$ and any $i\in N$. By definition of $B_{\mu }$%
-monotonicity, $f(top(\succcurlyeq _{i}),x_{N\backslash \left \{ i\right \}
})\in $ $[top(\succcurlyeq _{i}),f(x_{i},x_{N\backslash \left \{ i\right \}
})] $ for all $x_{N\backslash \left \{ i\right \} }\in X^{N\backslash
\left
\{ i\right \} }$ and $x_{i}\in X$. But then, since clearly $%
top(\succcurlyeq _{i})\succcurlyeq _{i}f(top(\succcurlyeq
_{i}),x_{N\backslash \left \{ i\right \} })$, either $f(top(\succcurlyeq
_{i}),x_{N\backslash \left \{ i\right \} })=top(\succcurlyeq _{i})$ or 
\textit{not }$f(x_{i},x_{N\backslash \left \{ i\right \} })\succ _{i}$%
\textit{\ }$f(top(\succcurlyeq _{i}),x_{N\backslash \left \{ i\right \} })$
by local unimodality of $\succcurlyeq _{i}$w.r.t. $B_{\mu }$. Hence, \textit{%
not }$f(x_{i},x_{N\backslash \left \{ i\right \} })\succ
_{i}f(top(\succcurlyeq _{i}),x_{N\backslash \left \{ i\right \} })$ in any
case. It follows that $f$ is indeed strategy-proof on $U_{B_{\mu }}^{N}$%
.\smallskip

$(ii)\Rightarrow (iii)$ Suppose that $f$ is $B_{\mu }$-monotonic\textbf{. }%
Hence, for all $i\in N$, $y_{i}\in X$, and $(x_{j})_{j\in N}\in X^{N}$, $\
f((x_{j})_{j\in N})\in I^{\mu }(x_{i},\ f(y_{i},(x_{j})_{j\in N\backslash
\left \{ i\right \} })$, i.e. $\ f((x_{j})_{j\in N})=\mu
(x_{i},f((x_{j})_{j\in N}),f(y_{i},(x_{j})_{j\in N\backslash \left \{
i\right \} }))$. Therefore, for any meet-irreducible element $m\in M_{%
\mathcal{X}}$,$\ f((x_{j})_{j\in N})\leqslant m$ if and only if%
\begin{equation*}
\mu (x_{i},f((x_{j})_{j\in N}),f(y_{i},(x_{j})_{j\in N\backslash \left \{
i\right \} }))=
\end{equation*}%
\begin{equation*}
=(x_{i}\vee f((x_{j})_{j\in N}))\wedge (f((x_{j})_{j\in N})\vee
f(y_{i},(x_{j})_{j\in N\backslash \left \{ i\right \} }))\wedge (x_{i}\vee
f(y_{i},(x_{j})_{j\in N\backslash \left \{ i\right \} }))\leqslant m.
\end{equation*}

It follows that if $\ f((x_{j})_{j\in N})\leqslant m$ \ then $%
[x_{i}\leqslant m$ or $f(y_{i},(x_{j})_{j\in N\backslash \left \{ i\right \}
})\leqslant m]$. Indeed, suppose that $\ f((x_{j})_{j\in N})\leqslant m$,
yet [ $x_{i}\nleqslant m$ and $f(y_{i},(x_{j})_{j\in N\backslash \left \{
i\right \} })\nleqslant m$]. Then, $(x_{i}\vee f((x_{j})_{j\in
N}))\nleqslant m$, $f((x_{j})_{j\in N})\vee f(y_{i},(x_{j})_{j\in
N\backslash \left \{ i\right \} })\nleqslant m$, and $x_{i}\vee
f(y_{i},(x_{j})_{j\in N\backslash \left \{ i\right \} })\nleqslant m$.
Therefore, since $\mathcal{X}$ is upper distributive, \ $(x_{i}\vee
f((x_{j})_{j\in N}))\wedge (f((x_{j})_{j\in N})\vee f(y_{i},(x_{j})_{j\in
N\backslash \left \{ i\right \} }))\nleqslant m$\ whence, by upper
distributivity again, $(x_{i}\vee f((x_{j})_{j\in N}))\wedge
(f((x_{j})_{j\in N})\vee f(y_{i},(x_{j})_{j\in N\backslash \left \{
i\right
\} }))\wedge (x_{i}\vee f(y_{i},(x_{j})_{j\in N\backslash \left \{
i\right
\} }))\nleqslant m$,\ i.e. $\mu (x_{i},f((x_{j})_{j\in
N}),f(y_{i},(x_{j})_{j\in N\backslash \left \{ i\right \}
}))=f((x_{j})_{j\in N})\nleqslant m$, a contradiction.

Now, suppose that $m\in M_{\mathcal{X}}$, $f(x_{N})\leqslant m$ and $%
N_{m}(x_{N})\subseteq N_{m}(y_{N})$ for some $x_{N}:=(x_{j})_{j\in N}$, $%
y_{N}:=(y_{j})_{j\in N}\in X^{N}$: we need to establish the claim that $%
f(y_{N})\leqslant m$ as well.

By $B_{\mu }$-monotonicity of $f$, $x_{i}\leqslant m$ or $%
f(y_{i},(x_{j})_{j\in N\backslash \left \{ i\right \} })\leqslant m$ for any 
$i\in N$. Thus, if $x_{i}\leqslant m$, then also $y_{i}\leqslant m$, by
assumption. Hence, $f((x_{j})_{j\in N})\leqslant m$ and $B_{\mu }$%
-monotonicity of $f$ entail $f(y_{i},(x_{j})_{j\in N\backslash \left \{
i\right \} })\leqslant m$: indeed, by $B_{\mu }$-monotonicity $%
f(y_{i},(x_{j})_{j\in N\backslash \left \{ i\right \} })\in I^{\mu }(y_{i},\
f((x_{j})_{j\in N}))$, i.e.%
\begin{equation*}
f(y_{i},(x_{j})_{j\in N\backslash \left \{ i\right \} })=\text{ }
\end{equation*}%
\begin{equation*}
=(y_{i}\vee f(y_{i},(x_{j})_{j\in N\backslash \left \{ i\right \} }))\wedge
(f(y_{i},(x_{j})_{j\in N\backslash \left \{ i\right \} })\vee
f((x_{j})_{j\in N}))\wedge (y_{i}\vee f((x_{j})_{j\in N}))\leqslant
\end{equation*}%
\begin{equation*}
\leqslant (y_{i}\vee f((x_{j})_{j\in N}))\leqslant m\text{.}
\end{equation*}%
It follows that $f(y_{i},(x_{j})_{j\in N\backslash \left \{ i\right \}
})\leqslant m$ in any case.

But then, from $B_{\mu }$-monotonicity of $f$ and $f(y_{i},(x_{j})_{j\in
N\backslash \left \{ i\right \} })\leqslant m$, it similarly follows that $%
x_{i+1}\leqslant m$ or $f((y_{i},y_{i+1},(x_{j})_{j\in N\backslash \left \{
i,i+1\right \} }))\leqslant m$. Now, $x_{i+1}\leqslant m$ entails $%
y_{i+1}\leqslant m$ as well, hence $f(y_{i},(x_{h})_{h\in N\backslash
\left
\{ i\right \} })\leqslant m$ and $B_{\mu }$-monotonicity jointly
imply $f((y_{i},y_{i+1},(x_{j})_{j\in N\backslash \left \{ i,i+1\right \}
}))\leqslant m$, by the same argument previously employed. Repeating the
argument, we eventually obtain $f((y_{i})_{i\in N})\leqslant m$, which
implies that $f$ is indeed \textit{monotonically }$M_{\mathcal{X}}$\textit{%
-independent} as required.\smallskip

$(iii)\Longrightarrow (ii)$ Suppose that $f$ is monotonically $M_{\mathcal{X}%
}$-independent but \textit{not }$B_{\mu }$-monotonic. Thus, there exist $%
i\in N$, $(x_{j})_{j\in N}\in X^{N}$, $y_{i}\in X$ such that $%
f((x_{j})_{j\in N})\neq \mu (x_{i},f((x_{j})_{j\in N}),f(y_{i},(x_{j})_{j\in
N\backslash \left \{ i\right \} }))$, i.e. there must exist $m\in M_{%
\mathcal{X}}$ such that $f((x_{j})_{j\in N})\leqslant m$ but $(x_{i}\vee
f((x_{j})_{j\in N}))\wedge (f((x_{j})_{j\in N})\vee f(y_{i},(x_{j})_{j\in
N\backslash \left \{ i\right \} })\wedge (x_{i}\vee f(y_{i},(x_{j})_{j\in
N\backslash \left \{ i\right \} }))\nleqslant m$ or $(x_{i}\vee
f((x_{j})_{j\in N}))\wedge (f((x_{j})_{j\in N})\vee f(y_{i},(x_{j})_{j\in
N\backslash \left \{ i\right \} })\wedge (x_{i}\vee f(y_{i},(x_{j})_{j\in
N\backslash \left \{ i\right \} }))\leqslant m$ but $f((x_{j})_{j\in
N})\nleqslant m$.

Thus, suppose that $f((x_{j})_{j\in N})\leqslant m$ and $(x_{i}\vee
f((x_{j})_{j\in N}))\wedge (f((x_{j})_{j\in N})\vee f(y_{i},(x_{j})_{j\in
N\backslash \left \{ i\right \} })\wedge (x_{i}\vee f(y_{i},(x_{j})_{j\in
N\backslash \left \{ i\right \} }))\nleqslant m$. Then, it must be the case
that $x_{i}\nleqslant m$ and $f(y_{i},(x_{j})_{j\in N\backslash \left \{
i\right \} })\nleqslant m$ whence by construction $N_{m}((x_{j})_{j\in
N})\subseteq N_{m}((y_{i},(x_{j})_{j\in N\backslash \left \{ i\right \} }))$
and therefore $f(y_{i},(x_{j})_{j\in N\backslash \left \{ i\right \}
})\leqslant m$ by monotonic $M_{\mathcal{X}}$-independence, a contradiction.
Next, suppose that $(x_{i}\vee f((x_{j})_{j\in N}))\wedge (f((x_{j})_{j\in
N})\vee f(y_{i},(x_{j})_{j\in N\backslash \left \{ i\right \} })\wedge
(x_{i}\vee f(y_{i},(x_{j})_{j\in N\backslash \left \{ i\right \}
}))\leqslant m $ and $f((x_{j})_{j\in N})\nleqslant m$.

Since, by upper distributivity of $\mathcal{X}$ , it must be the case that
either $(x_{i}\vee f((x_{j})_{j\in N}))\leqslant m$ or $(f((x_{j})_{j\in
N})\vee f(y_{i},(x_{j})_{j\in N\backslash \left \{ i\right \} })\leqslant m$
or else $(x_{i}\vee f(y_{i},(x_{j})_{j\in N\backslash \left \{ i\right \}
}))\leqslant m$, it follows that $(x_{i}\vee f(y_{i},(x_{j})_{j\in
N\backslash \left \{ i\right \} }))\leqslant m$ hence in particular both $%
x_{i}\leqslant m$ and $f(y_{i},(x_{j})_{j\in N\backslash \left \{ i\right \}
})\leqslant m$. Thus,%
\begin{equation*}
N_{j}((y_{i},(x_{j})_{j\in N\backslash \left \{ i\right \} }))\subseteq
N_{j}((x_{j})_{j\in N})\text{ and }f(y_{i},(x_{j})_{j\in N\backslash \left \{
i\right \} })\leqslant m
\end{equation*}%
whence, by monotonic $M_{\mathcal{X}}$-independence, $f((x_{j})_{j\in
N})\leqslant m$, a contradiction again, and the thesis is established.
\end{proof}

\section{Appendix B}

We present here further properties of a median semi-lattice.

A \textit{chain }of poset $\mathcal{X}=(X,\leqslant )$ is a set $Y\subseteq
X $ such that for any \textit{distinct} $u,v\in Y$ either $u\leqslant v$ or $%
v\leqslant u$ holds, and its \textit{length }$l(Y)$ is $|Y|-1$ (where $|Y|$
denote its size). A chain $Y$ of $(X,\leqslant )$ having $x$ as its $%
\leqslant $-minimum and $y$ as its $\leqslant $-maximum\ is \textit{maximal }%
if there is no $z\in X\setminus Y$ such that $x\leqslant z\leqslant y$. For
any $x,y\in X$ such that $x<y$ (i.e. $x\leqslant y$ and \textit{not} $%
y\leqslant x$) the \textit{length }of the order-interval $[x,y]:=\left \{
z\in X:x\leqslant z\leqslant y\right \} $, written $l([x,y])$, is the length
of a (maximal) chain of \textit{maximum }length having $x$ as its $\leqslant 
$-minimum and $y$ as its $\leqslant $-maximum. In particular, $x\in X$ is
said to be \textit{covered }by $y\in X$, written $x\ll y$, iff $x<y$ and $%
[x,y]=\left \{ x,y\right \} $, namely $l(\left \{ x,y\right \} )=1$. The 
\textit{covering graph }$C(\mathcal{X})=(X,E^{\ll })$ of $\mathcal{X}$ is
the undirected graph having $X$ as vertex-set and $E^{\ll }:=\left \{
\left
\{ x,y\right \} \subseteq X:x\ll y\text{ or }y\ll x\right \} $ as
edge-set. A \textit{path }$\pi _{xy\text{ }}$of $C(\mathcal{X})$ connecting
two vertices $x$ and $y$ is a maximal chain $\left \{
z_{0},....,z_{k}\right
\} $ of $\mathcal{X}$ such that $\left \{
z_{0},z_{k}\right \} =\left \{ x,y\right \} $ and $z_{i}\ll z_{i+1}$, for
any $i=1,...,k-1$, and is of \textit{length }$l(\pi _{xy})=k$\textit{. }The
set of all paths of $C(\mathcal{X})$ connecting $x$ and $y$ is denoted by $%
\Pi _{xy}$. A \textit{geodesic }from $x $ to $y$ on $C(\mathcal{X})$ is a
path of\textit{\ minimum length} (or equivalently a \textit{shortest path})
connecting $x$ and $y$. It can be easily proved (and left to the reader to
check) that the \textit{shortest length} function $\delta _{C(\mathcal{X}%
)}:X\times X\rightarrow \mathbb{Z}_{+}$ such that, for any $x,y\in X$, $%
\delta _{C(\mathcal{X})}(x,y):=l(\pi _{xy})$ (where $\pi _{xy}$ is a path of
minimum length in $\Pi _{xy}$) is indeed a \textit{metric }namely for any $%
x,y,z\in X$: (i) $\delta _{C(\mathcal{X})}(x,y)=0$ iff $x=y$, (ii) $\delta
_{C(\mathcal{X})}(x,y)=\delta _{C(\mathcal{X})}(y,x)$, (iii) $\delta _{C(%
\mathcal{X})}(x,y)\leq \delta _{C(\mathcal{X})}(x,z)+\delta _{C(\mathcal{X}%
)}(z,y)$.

The following properties of finite upper-distributive semi-lattices are also
to be recalled:

\begin{claim}
$(i)$ a finite upper distributive join-semilattice $\mathcal{X}=(X,\leqslant
)$ is \textbf{graded }i.e. it admits a \textbf{rank function} namely a
function $r:X\rightarrow \mathbb{Z}_{+}$ such that for any $x,y\in X$ if $%
x\ll y$ then $r(y)=r(x)+1$ (see Barbut, Monjardet (1970), Leclerc (1994));

$(ii)$ the rank function of a finite upper distributive join-semilattice is
a \textbf{valuation}, namely for any $x,y\in X$ \ the following condition
holds: \textit{if }the meet $x\wedge y$ exists then $\ r(x)+r(y)=r(x\vee
y)+r(x\wedge y)$ (Leclerc (1994)).
\end{claim}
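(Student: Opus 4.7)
The plan is to exploit the fact that upper distributivity forces every principal filter $\uparrow u$ to be a distributive lattice, and then to lift classical grading and valuation results for finite distributive lattices to the whole semilattice. Both parts of the claim are really statements about principal filters, which is why upper distributivity is precisely the hypothesis that makes them work.

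For $(i)$, I would proceed as follows. Fix any $x\in X$ and any $y\in X$ with $x\leqslant y$. The order-interval $[x,y]$ is contained in $\uparrow x$, and by upper distributivity $(\uparrow x,\leqslant_{|\uparrow x})$ is a finite distributive lattice; hence so is $[x,y]$ itself. Finite distributive lattices satisfy the Jordan--H\"older condition, so every maximal chain in $[x,y]$ has the same length $l([x,y])$. In particular, taking $y=\mathbf{1}$ shows that $l([x,\mathbf{1}])$ is well defined, and since chains in $[x,\mathbf{1}]$ are chains in $\mathcal{X}$, the quantity $l(\mathcal{X}):=\max_{z\in X}l([z,\mathbf{1}])$ is finite. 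Define
\begin{equation*}
r(x):=l(\mathcal{X})-l([x,\mathbf{1}]).
\end{equation*}
If $x\ll y$, then concatenating any maximal chain in $[y,\mathbf{1}]$ with the edge $x\ll y$ produces a maximal chain in $[x,\mathbf{1}]$, so $l([x,\mathbf{1}])=l([y,\mathbf{1}])+1$, and consequently $r(y)=r(x)+1$. Thus $r$ is a genuine rank function and $\mathcal{X}$ is graded.

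For $(ii)$, suppose $x\wedge y$ exists. Then $x,y,x\vee y,x\wedge y$ all lie in $L:=\uparrow(x\wedge y)$, which is a finite distributive lattice by upper distributivity. Its bottom is $x\wedge y$, and its intrinsic rank function $r_L$ (assigning $0$ to $x\wedge y$ and increasing by $1$ on covers) is a valuation on $L$: for any $a,b\in L$,
\begin{equation*}
r_L(a)+r_L(b)=r_L(a\vee b)+r_L(a\wedge b),
\end{equation*}
this being the classical modularity identity which holds in every finite distributive lattice (via Birkhoff's representation, or simply because every distributive lattice is modular). Moreover, gradedness of $\mathcal{X}$ established in part $(i)$ implies that for every $z\in L$ a maximal chain in $[x\wedge y,z]$ is also maximal as a chain of $\mathcal{X}$, so $r(z)-r(x\wedge y)=l([x\wedge y,z])=r_L(z)$. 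Applying this to $z\in\{x,y,x\vee y,x\wedge y\}$ and substituting into the valuation identity in $L$ yields, after cancelling $2\,r(x\wedge y)$ on the left against $r(x\wedge y)+r(x\wedge y)$ on the right,
\begin{equation*}
r(x)+r(y)=r(x\vee y)+r(x\wedge y),
\end{equation*}
as required.

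The only nontrivial point is the well-definedness of $l([x,y])$ across the full semilattice: since $\mathcal{X}$ need not itself be a lattice, one cannot directly invoke the Jordan--H\"older theorem for $\mathcal{X}$, and one must route the argument through the principal filter $\uparrow x$ (which upper distributivity does give as a distributive lattice). Once this is secured, both $(i)$ and $(ii)$ reduce to standard facts about finite distributive lattices, matching the attributions to Barbut--Monjardet (1970) and Leclerc (1994) cited in the statement.
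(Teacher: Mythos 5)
Your proof is correct. Note that the paper itself offers no argument for this Claim at all: it simply attributes both parts to Barbut--Monjardet (1970) and Leclerc (1994), so there is no ``paper proof'' to match against. What you supply is a self-contained derivation, and its architecture is sound: upper distributivity makes every principal filter $\uparrow u$ a finite distributive lattice, so each order-interval $[x,\mathbf{1}]$ inherits the Jordan--H\"older chain condition, which legitimises the definition $r(x)=l(\mathcal{X})-l([x,\mathbf{1}])$ and the cover computation $l([x,\mathbf{1}])=l([y,\mathbf{1}])+1$ (your implicit use of the fact that a cover in $[y,\mathbf{1}]$ is a cover in $X$, because $\uparrow y$ is an up-set, is the right observation and worth stating explicitly). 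For part $(ii)$ you correctly reduce to the modular valuation identity in the distributive lattice $\uparrow(x\wedge y)$; the only points you leave tacit are that the meet and join of $x$ and $y$ computed inside $\uparrow(x\wedge y)$ coincide with those in $\mathcal{X}$ (true, since $\uparrow(x\wedge y)$ is an up-set containing $x\wedge y$) and that covers in $\uparrow(x\wedge y)$ are covers in $X$, which gives $r_L(z)=r(z)-r(x\wedge y)$. Compared with the paper's bare citation, your route makes visible exactly where upper distributivity is used --- it is needed precisely to apply lattice-theoretic Jordan--H\"older and modularity locally, filter by filter, in a structure that is not globally a lattice --- which is the substantive content the cited references encapsulate.
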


In order to appreciate the actual content of the co-majority rule and its
specialization to the case of total preorders , some further properties of
median join-semilattices are to be introduced and discussed here.

\begin{claim}
(Barth\'{e}lemy (1978), Monjardet (1981), Leclerc (1994)). Let $\mathcal{X}$ 
$=(X,\leqslant )$ be a finite upper distributive join-semilattice, $1$ its
top element, $C(\mathcal{X})$ its covering graph, and $r$ its normalized
rank function defined as follows: for any $x\in X$, $r(x):=r(1)-l([x,1])$.
Then,

$(i)$ the function $d_{r}:X\times X\rightarrow \mathbb{Z}_{+}$ such that for
any $x,y\in X$ $d_{r}(x,y):=2r(x\vee y)-r(x)-r(y)$ is a metric on $X$;

$(ii)$ $d_{r}=\delta _{C(\mathcal{X})}$;

$(iii)$ $\delta _{C(\mathcal{X})}(x,y)=\delta _{C(\mathcal{X})}(x,z)+\delta
_{C(\mathcal{X})}(z,y)$ for any $x,y,z\in X$ such that $z\in \pi _{xy}$ for
some geodesics $\pi _{xy}$ from $x$ to $y$ on $C(\mathcal{X})$.
\end{claim}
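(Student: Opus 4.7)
The plan is to treat the three parts in order; (iii) drops out almost for free once (ii) is in hand.

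For (i), symmetry and non-negativity of $d_r$ follow from the definition together with monotonicity of $r$ (since $x,y\leqslant x\vee y$ gives $r(x\vee y)\geq r(x),r(y)$). Positive definiteness uses that $r$ is strictly increasing along covers in the graded poset $\mathcal{X}$: $d_r(x,y)=0$ forces $r(x\vee y)=r(x)=r(y)$, hence $x=x\vee y=y$. For the triangle inequality I would exploit upper distributivity. For any $z\in X$, the principal filter $\uparrow z$ is a distributive lattice on which $r$ restricts to a valuation (Claim 2 of Appendix B). Since $x\vee z,\, z\vee y\in\uparrow z$, the meet $(x\vee z)\wedge(z\vee y)$ exists, is $\geqslant z$, and the valuation identity yields
\[
r(x\vee z)+r(z\vee y)=r(x\vee y\vee z)+r((x\vee z)\wedge(z\vee y))\geq r(x\vee y)+r(z),
\]
which rearranges to exactly $d_r(x,z)+d_r(z,y)\geq d_r(x,y)$.

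For (ii), the inequality $\delta_{C(\mathcal{X})}(x,y)\leq d_r(x,y)$ is witnessed by a concrete walk: concatenate a maximal chain from $x$ up to $x\vee y$ (of length $r(x\vee y)-r(x)$) with a maximal chain from $x\vee y$ down to $y$ (of length $r(x\vee y)-r(y)$). The reverse inequality is the principal obstacle. Given an arbitrary walk $x=z_0,\ldots,z_k=y$ in $C(\mathcal{X})$, classify each step as \emph{up} or \emph{down} according to whether $z_i\ll z_{i+1}$ or $z_{i+1}\ll z_i$, and let $U,D$ denote the respective counts so that $U+D=k$. I would then track the cumulative joins $u_i:=z_0\vee\cdots\vee z_i$, with $u_0=x$ and $u_k\geqslant x\vee y$, and show that $r(u_{i+1})-r(u_i)\in\{0,1\}$, equal to $0$ on every down-step. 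The down-step case is immediate since $z_{i+1}\leqslant z_i\leqslant u_i$ forces $u_{i+1}=u_i$. On an up-step $z_i\ll z_{i+1}$, the meet $u_i\wedge z_{i+1}$ is well defined in the distributive lattice $\uparrow z_i$, lies between $z_i$ and $z_{i+1}$, hence equals one of them; the valuation identity in $\uparrow z_i$ then forces $r(u_{i+1})\in\{r(u_i),\,r(u_i)+1\}$. Telescoping together with $u_k\geqslant x\vee y$ gives $U\geq r(x\vee y)-r(x)$, and the symmetric argument applied to the reversed walk (on which up- and down-steps are swapped) gives $D\geq r(x\vee y)-r(y)$. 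Summing yields $k\geq d_r(x,y)$, as needed.

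Finally, (iii) is the general shortest-path decomposition principle: if $z$ lies on some geodesic $\pi_{xy}$, the two sub-walks into which $\pi_{xy}$ splits at $z$ must themselves be geodesics between their respective endpoints, because replacing either by a strictly shorter walk would produce an $x$-to-$y$ walk of length less than $l(\pi_{xy})$, contradicting minimality. Combined with the triangle inequality inherited from (i)--(ii), this yields $\delta_{C(\mathcal{X})}(x,y)=\delta_{C(\mathcal{X})}(x,z)+\delta_{C(\mathcal{X})}(z,y)$.
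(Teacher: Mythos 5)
Your proof is correct, but it is worth noting that the paper does not actually prove this Claim at all: its ``proof'' is a pointer to Barth\'elemy (1978, Proposition 1), Monjardet (1981, Theorem 8) and Leclerc (1994, Theorem 3.1). What you have written is a self-contained reconstruction, and it is essentially the standard argument from that literature. The two ingredients you lean on are exactly the ones the paper makes available in Appendix~B: gradedness plus the valuation identity $r(a)+r(b)=r(a\vee b)+r(a\wedge b)$ whenever $a\wedge b$ exists (Claim~2), and upper distributivity to guarantee that the relevant meets do exist inside principal filters $\uparrow z$ (so that, e.g., $(x\vee z)\wedge(z\vee y)$ is well defined and $\geqslant z$, which is what makes your triangle-inequality computation legitimate). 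Your lower-bound argument for $d_r\leq\delta_{C(\mathcal{X})}$ --- classifying steps of an arbitrary walk as up or down and using the cumulative join $u_i=z_0\vee\cdots\vee z_i$ as a potential whose rank increases by at most one per up-step and not at all per down-step --- is the Leclerc/Monjardet device, and the cover relation $z_i\ll z_{i+1}$ correctly forces $u_i\wedge z_{i+1}\in\{z_i,z_{i+1}\}$ so the valuation identity pins down the increment. One small point in your favour: the paper's own definition of ``path'' in $C(\mathcal{X})$ is phrased only for monotone chains, which would make statement (ii) vacuous for incomparable $x,y$; your argument quietly repairs this by working with arbitrary walks in the covering graph, which is the reading under which the Claim is true. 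Part (iii) is, as you say, the generic geodesic-splitting argument and needs nothing specific to semilattices.
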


\begin{proof}
See Barth\'{e}lemy (1978) (Proposition 1), Monjardet (1981) (Theorem 8), and
Leclerc (1994) (Theorem 3.1).
\end{proof}

As a consequence, for any nonnegative integer $n\in \mathbb{N}$ a nonempty 
\textit{metric median set }$\mathbf{m}(x_{1},...,x_{n})$ can be defined on
any finite family -or \textit{profile- }of $n$ elements $x_{i}$, $i=1,...,n$
of a finite \textit{median} join-semilattice $\mathcal{X}$ $=(X,\leqslant )$
with rank function $r$ and covering graph $C(\mathcal{X})$ as follows:%
\begin{equation*}
\mathbf{m}(x_{1},...,x_{n}):=\left \{ z\in X:z\in \arg \min_{x\in
X}\sum_{i=1}^{n}d(x,x_{i})\right \} \text{,}
\end{equation*}
where $d=d_{r}=\delta _{C(\mathcal{X})}$ as defined above.

Thus, a \textit{metric median function }$\mathbf{m}:\dbigcup \limits_{n\in 
\mathbb{N}}X^{n}$ $\rightarrow \mathcal{P}(X)$ (where $\mathcal{P}(X)$
denotes the power set of $X$), and all of its restrictions $\mathbf{m}%
_{(n)}:X^{n}$ $\rightarrow \mathcal{P}(X)$ to a fixed $n\in \mathbb{N}$, are
well-defined and \textit{nonempty-valued}. In particular, if $n$ is \textit{%
odd }then it is well-known and easily proved that $\mathbf{m}_{(n)}$ is 
\textit{single-valued, }hence it can also be regarded as an $n$-ary
algebraic \textit{operation} on $X$, written $\widehat{\mathbf{m}}%
_{(n)}:X^{n}$ $\rightarrow X$ (see e.g. Bandelt, Barth\'{e}lemy (1984),
Monjardet, Raderanirina (2004), Hudry, Leclerc, Monjardet, Barth\'{e}lemy
(2009)).

The following well-known key result clarifies the tight connection (indeed,
the equivalence) between the co-majority aggregation rule and the foregoing
restrictions of the metric median function on a finite median
join-semilattice $\mathcal{X}$ $=(X,\leqslant )$.

\begin{claim}
(Bandelt, Barth\'{e}lemy (1984)) Let $\mathcal{X}$ $=(X,\leqslant )$ be a
finite median join-semilattice and $\mathbf{m}$ its metric median function.
Then, for any $n\in \mathbb{N}$, and any $(x_{1},...,x_{n})\in X^{n}$,%
\begin{equation*}
f^{\partial maj}(x_{N}):=\dbigwedge \limits_{S\in \mathcal{W}%
^{maj}}(\dbigvee \limits_{i\in S}x_{i})\in \mathbf{m}_{(n)}(x_{1},...,x_{n})%
\text{.}
\end{equation*}%
+Moreover, if $n$ is odd then%
\begin{equation*}
f^{\partial maj}(x_{N}):=\dbigwedge \limits_{S\in \mathcal{W}%
^{maj}}(\dbigvee \limits_{i\in S}x_{i})=\widehat{\mathbf{m}}%
_{(n)}(x_{1},...,x_{n})\text{.}
\end{equation*}
\end{claim}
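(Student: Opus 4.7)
My plan is to decompose the remoteness $\sum_{i} d(\cdot, x_i)$ along the meet-irreducibles of $\mathcal{X}$ and to show that for each meet-irreducible the optimal ``coordinate'' choice is made by strict majority, whence $f^{\partial maj}$ realizes the coordinate-wise lower bound simultaneously at every $m \in M_{\mathcal{X}}$. The first step is the representation
\begin{equation*}
d(x,y) \;=\; |M(x) \,\triangle\, M(y)| \qquad \text{for all } x,y \in X,
\end{equation*}
where $M(z) := \{m \in M_{\mathcal{X}} : z \leqslant m\}$. To derive this, I would combine the factorization $z = \bigwedge M(z)$, the valuation property of the rank function (Claim 5 in Appendix B), and upper distributivity (via Claim 1) to rewrite $d_r(x,y) = 2r(x \vee y) - r(x) - r(y)$ as the displayed symmetric-difference count.

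Granted this identity, for any $x \in X$ one has
\begin{equation*}
\sum_{i=1}^{n} d(x, x_i) \;=\; \sum_{m \in M_{\mathcal{X}}} \bigl|\{i \in N : \chi_m(x) \neq \chi_m(x_i)\}\bigr|,
\end{equation*}
where $\chi_m(z) = 1$ if $z \leqslant m$ and $0$ otherwise. For each fixed $m$ the inner sum is minimized by setting $\chi_m(x)$ to the more numerous value among $(\chi_m(x_i))_{i \in N}$, which yields the lower bound $\sum_{m} \min\{|N_m(x_N)|,\, n - |N_m(x_N)|\}$. On the other hand,
\begin{equation*}
f^{\partial maj}(x_N) \leqslant m \;\Longleftrightarrow\; \exists\, S \in \mathcal{W}^{maj} \text{ with } x_i \leqslant m \ \forall\, i \in S \;\Longleftrightarrow\; |N_m(x_N)| \geq \Bigl\lfloor \tfrac{n+2}{2} \Bigr\rfloor,
\end{equation*}
so $\chi_m(f^{\partial maj}(x_N))$ equals the strict-majority value among $(\chi_m(x_i))_{i \in N}$ (ties broken to $0$) at every $m$ \emph{simultaneously}. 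Hence $f^{\partial maj}(x_N)$ attains the per-coordinate lower bound for every $m$, yielding $f^{\partial maj}(x_N) \in \mathbf{m}_{(n)}(x_1,\ldots,x_n)$. When $n$ is odd the strict majority is defined unambiguously at every $m$, so any minimizer $x^{\ast}$ must satisfy $M(x^{\ast}) = M(f^{\partial maj}(x_N))$; since each element equals the meet of its upper meet-irreducibles, this forces $x^{\ast} = f^{\partial maj}(x_N)$, yielding $\widehat{\mathbf{m}}_{(n)}(x_1,\ldots,x_n) = f^{\partial maj}(x_N)$.

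The main obstacle is the first step: in a genuine median join-semilattice (as opposed to a distributive lattice) some of the meets appearing in intermediate computations may fail to exist, so the reduction of $2r(x \vee y) - r(x) - r(y)$ to $|M(x) \,\triangle\, M(y)|$ must be carried out with care, invoking upper distributivity and co-coronation to ensure that the required subexpressions are well-defined and that the family of meet-irreducibles above an element faithfully encodes its rank contributions. Once the symmetric-difference formula is in hand, the remainder of the argument is essentially one-dimensional, reducing to the familiar fact that strict majority minimizes the sum of Hamming distances to a $0$-$1$ profile.
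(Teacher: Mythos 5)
Your proposal is correct, but it proves the claim from scratch, whereas the paper's ``proof'' is a one-line appeal to Proposition 5 and Corollaries 1--2 of Bandelt--Barth\'elemy (1984); so the two are genuinely different in kind. Your route -- rewrite $d_r(x,y)=2r(x\vee y)-r(x)-r(y)$ as $|M(x)\,\triangle\,M(y)|$, swap the order of summation so the remoteness splits into independent $0$--$1$ majority problems indexed by $m\in M_{\mathcal{X}}$, and observe that $f^{\partial maj}$ realizes the per-coordinate optimum at every $m$ simultaneously -- is essentially the classical latticial-median argument, and it has the advantage of making the odd-$n$ uniqueness transparent: when $n$ is odd the coordinate-wise optimum is unique, so any minimizer $x^{\ast}$ has $M(x^{\ast})=M(f^{\partial maj}(x_N))$ and hence equals $f^{\partial maj}(x_N)$ since $x=\bigwedge M(x)$. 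Two points deserve explicit justification rather than a wave at ``care.'' First, the identity $d_r(x,y)=|M(x)\,\triangle\,M(y)|$ does hold in a general finite median join-semilattice (the paper only records it for distributive lattices): since $\uparrow x$ is a distributive lattice whose meet-irreducibles are exactly the global meet-irreducibles above $x$, one gets $r(x\vee y)-r(x)=|M(x)\setminus M(x\vee y)|$, and $M(x\vee y)=M(x)\cap M(y)$ finishes it -- no meets of incomparable elements are ever needed because $d_r$ involves only the join. Second, the direction $f^{\partial maj}(x_N)\leqslant m\Rightarrow\exists S\in\mathcal{W}^{maj}$ with $\bigvee_{i\in S}x_i\leqslant m$ is exactly where the paper's Claim 1 (the prime-filter property of meet-irreducibles in upper distributive semilattices) is indispensable; you invoke Claim 1 only for the metric identity, but it is this second use that makes the coordinate of $f^{\partial maj}$ equal the strict-majority value. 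With those two points spelled out, the argument is complete.
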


\begin{proof}
Immediate, by Proposition 5 and dualization of Corollaries 1 and 2 of
Bandelt, Barth\'{e}lemy (1984).
\end{proof}

Thus, the co-majority rule is essentially the same as a \textit{metric
median rule}, namely a rule that selects a metric median.

Finally, some further facts concerning betweenness relations in finite
median join-semilattices are worth mentioning here in order to appreciate
the naturalness and robustness of the betweenness relation involved in our
main characterization Theorem.

Generally speaking, at least \textit{three} distinct betweenness relations
can be defined in a natural way on any finite\textit{\ graded }%
join-semilattice (see e.g. Sholander (1952, 1954), Avann (1961), Van de Vel
(1993)), namely:

$\left( i\right) $ \textit{median betweenness }$B_{\mu }$ (for all $x,y,z\in
X$, $B_{\mu }(x,y,z)$ iff $\mu (x,y,z)=y$ where is the possibly \textit{%
partial }median operation as defined above);

$\left( ii\right) $ \textit{interval betweenness }$B_{I}$ (for all $x,y,z\in
X$, $B_{I}(x,y,z)$ iff $y\leqslant x\vee z$ and $x\leqslant y$ or $%
z\leqslant y$),

$\left( iii\right) $ \textit{metric betweenness }$B_{d}$ (for all $x,y,z\in
X $, $B_{d}(x,y,z)$ iff $d(x,y)+d(y,z)=d(x,z)$, with $d=d_{r}=\delta _{C(%
\mathcal{X})}$ the interval-length-based metric as defined above).

Now, it turns out that if a finite join-semilattice is \textit{median }then
the relationships among $B_{\mu }$,$B_{I}$ and $B_{d}$ is very tight, as
made precise by the following claim.

\begin{claim}
(Sholander (1952, 1954), Avann (1961), Barbut, Monjardet (1970), Leclerc
(1994)). Let $\mathcal{X}$ $=(X,\leqslant )$ be a finite median
join-semilattice. Then, $B_{I}\subseteq B_{\mu }=B_{d}$. Moreover, if $%
\mathcal{X}$ $=(X,\leqslant )$ is in particular a distributive lattice, then

$(i)\ B_{I}(x,y,z)$ holds if and only if $\ x\wedge z\leqslant y\leqslant
x\vee z$;

$(ii)$ $B_{I}=B_{\mu }=B_{d}$;

$(iii)$ $d_{r}(x,y)=r(x\vee y)-r(x\wedge y)=|(M(x)\setminus (M(y))\cup
(M(y)\setminus M(x))|$ (where $M(z):=\left \{ m\in M_{\mathcal{X}%
}:z\leqslant m\right \} $).
\end{claim}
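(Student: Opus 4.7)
The claim has two layers: first, the general inclusion $B_I\subseteq B_\mu$ and the equality $B_\mu=B_d$ in any finite median join-semilattice; second, three sharpenings available when $\mathcal{X}$ is a distributive lattice. My plan is to treat them in that order, leaning on Claims 3 and 4 together with the characteristic identities $(\mu_1)$--$(\mu_2)$ of the median operation.

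For $B_I\subseteq B_\mu$, fix $(x,y,z)$ with $B_I(x,y,z)$, so $y\leqslant x\vee z$ and, say, $x\leqslant y$. Then $x\vee y=y$, while $y\leqslant y\vee z$ and $y\leqslant x\vee z$, so all three factors of $\mu(x,y,z)=(x\vee y)\wedge(y\vee z)\wedge(x\vee z)$ dominate $y$ and their meet equals $y$; the case $z\leqslant y$ is symmetric. For $B_\mu=B_d$ I would argue in two matching directions. Given $y=\mu(x,y,z)$, meet-Helly (co-coronation) supplies $x\wedge y$, $y\wedge z$, $x\wedge z$ and $x\wedge y\wedge z$, so the entire computation takes place inside the distributive lattice $\uparrow(x\wedge y\wedge z)$; there the valuation property (Claim 3(ii)) converts $d_r(x,y)=2r(x\vee y)-r(x)-r(y)$ into $r(x\vee y)-r(x\wedge y)$, and adding the analogues yields $d_r(x,y)+d_r(y,z)=d_r(x,z)$. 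Conversely, by Claim 4(ii) $d_r=\delta_{C(\mathcal{X})}$, so $B_d(x,y,z)$ means $y$ lies on a geodesic from $x$ to $z$ in the covering graph; I would show by induction on $d_r(x,z)$ that every such $y$ satisfies $y=\mu(x,y,z)$, each cover step being carried out inside the local distributive lattice provided by upper distributivity via $(\mu_2)$.

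The distributive-lattice refinements then fall out. Piece (i) follows from the definition of $B_I$ by using distributivity to recast ``$x\leqslant y$ or $z\leqslant y$'' jointly with $y\leqslant x\vee z$ as $x\wedge z\leqslant y\leqslant x\vee z$. Piece (ii) combines (i) with the already established $B_I\subseteq B_\mu=B_d$, plus a short distributive computation showing that $y=\mu(x,y,z)$ forces $x\wedge z\leqslant y\leqslant x\vee z$. For (iii), the valuation identity $r(x)+r(y)=r(x\vee y)+r(x\wedge y)$ rewrites $d_r(x,y)=2r(x\vee y)-r(x)-r(y)$ as $r(x\vee y)-r(x\wedge y)$, and Birkhoff's representation, identifying each element $z$ with its set $M(z)$ of meet-irreducibles above $z$, turns this difference into $|(M(x)\setminus M(y))\cup(M(y)\setminus M(x))|$.

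The main obstacle is the converse half of $B_\mu=B_d$: passing from a geodesic-interior point to fixedness under $\mu(x,\cdot,z)$. Because $\mathcal{X}$ is only a semilattice, meets need not exist globally, so each cover step in the induction must be confined to a distributive sublattice by repeated appeals to meet-Helly, and making this local-to-global induction precise is the delicate point; the forward direction and the distributive-lattice specializations are then essentially direct valuation or representation-theoretic computations.
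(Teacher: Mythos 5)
The paper does not actually prove this claim: it is imported verbatim from the cited literature (Sholander, Avann, Barbut--Monjardet, Leclerc), so there is no in-paper argument to compare with. Judged on its own merits, your sketch has genuine gaps exactly where the weight lies. In the direction $B_{\mu }\subseteq B_{d}$ you assert that co-coronation ``supplies'' $x\wedge y$, $y\wedge z$, $x\wedge z$ and $x\wedge y\wedge z$ once $y=\mu (x,y,z)$. It does not: co-coronation is a conditional (\emph{if} the pairwise meets exist, \emph{then} the triple meet exists), and the pairwise meets can fail. Take $X=\left\{ a,b,1\right\} $ with $a,b$ incomparable and $a\vee b=1$; this is a median join-semilattice, $\mu (a,1,b)=1$ so $B_{\mu }(a,1,b)$ holds, yet $a\wedge b$ does not exist, so neither the filter $\uparrow (x\wedge y\wedge z)$ nor the rewriting $d_{r}(x,z)=r(x\vee z)-r(x\wedge z)$ is available. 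A correct route stays with joins: the identity to establish is $r(x\vee y)+r(y\vee z)=r(x\vee z)+r(y)$, which follows by applying the valuation property of the rank function to the pairs $\left\{ x\vee y,\ y\vee z\right\} $ and $\left\{ (x\vee y)\wedge (y\vee z),\ x\vee z\right\} $ (these meets do exist, since $y$ is a common lower bound in each case) together with $\mu (x,y,z)\leqslant x\vee z$. The converse $B_{d}\subseteq B_{\mu }$ is the part you yourself flag as delicate, and your sketch offers no actual mechanism for the induction step, so that direction remains unproved.

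There is also a problem with your justification of part $(i)$. Distributivity does not recast ``$x\leqslant y$ or $z\leqslant y$'' as ``$x\wedge z\leqslant y$'': in the Boolean lattice $\mathcal{P}(\left\{ 1,2\right\} )$ with $x=\left\{ 1\right\} $, $y=\varnothing $, $z=\left\{ 2\right\} $ one has $x\wedge z\leqslant y\leqslant x\vee z$ while neither $x\leqslant y$ nor $z\leqslant y$ holds. Only one implication is immediate from the stated definition of $B_{I}$; the biconditional in $(i)$ (and hence the equality $B_{I}=B_{\mu }$ in $(ii)$) cannot be obtained by the one-line recasting you propose and requires the lattice-theoretic argument of the cited sources, or an explicit re-specification of $B_{I}$ in the lattice case. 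By contrast, your proof of $B_{I}\subseteq B_{\mu }$ is correct, and part $(iii)$ (the valuation identity plus the meet-irreducible representation yielding $|(M(x)\setminus M(y))\cup (M(y)\setminus M(x))|$) is sound.
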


It is thus confirmed, in particular, that local unimodality (the notion of
single-peakedness introduced above and used in Theorem 1) rests on a very
natural and robust notion of betweenness on the underlying poset $\mathcal{X}
$ $=(X,\leqslant )$, which is in turn tightly anchored to an `intrinsic' 
\textit{metric} of $\mathcal{X}$ itself. It should also be emphasized here
that if $\mathcal{X}$ is in particular a median join-semilattice of binary
relations on a finite set (as discussed in the next section), then point
(iii) of the previous Claim establishes that $d_{r}$ is precisely the
so-called \textit{Kemeny distance }for binary relations as defined below
(see Kemeny (1959)).\smallskip 

\textit{Kemeny distance on binary relations. }Let $A$ be a finite set and $%
(B_{A},\subseteq )$ the poset of all binary relations on $A$.\footnote{%
Observe that $(B_{A}^{r},\subseteq )$ is indeed a distributive lattice since
it is obviously closed with respect to both intersection $\cap $ and union $%
\cup $.} Then the \textit{Kemeny distance }on $(B_{A},\subseteq )$ is the
function $d_{K}:B_{A}\rightarrow \mathbb{Z}_{+}$ defined as follows: for any 
$R,R^{\prime }\in B_{A}$,%
\begin{equation*}
d_{K}(R,R^{\prime }):=|\left \{ (x,y)\in A\times A:xRy\text{ and \textit{not }%
}xR^{\prime }y\right \} \cup \left \{ (x,y)\in A\times A:xR^{\prime }y\text{
and \textit{not }}xRy\right \} |\text{.\footnote{%
Thus, the Kemeny distance is just a specialization of the set-theoretic 
\textit{symmetric difference metric }to binary relations (see e.g. Barbut,
Monjardet (1970)).}}
\end{equation*}

\section{Appendix C}

\textbf{Related literature\smallskip }

The study of aggregation rules for ordered sets, semilattices, and lattices
was pioneered by Monjardet and his co-workers, whose contributions provide
characterizations of several classes of such rules mostly within a fixed
population setting but also, occasionally, within a variable population
framework (see e.g. Barth\'{e}lemy, Monjardet (1981), Bandelt, Barth\'{e}%
lemy (1984), Monjardet (1990),\ Barth\'{e}lemy, Janowitz (1991), Leclerc
(1994), Monjardet, Raderanirina (2004), Hudry et al. (2009)). In particular,
characterizations of the simple majority and co-majority rules (sometimes
also denoted as \textit{`median' rules}) are established in several
latticial and semilatticial settings both as aggregation rules within a
fixed population framework (see e.g. Monjardet (1990)) and as
multi-aggregation rules within a variable population framework (see e.g.
Barth\'{e}lemy, Janowitz (1991), Monjardet, Raderanirina (2004)).

Concerning the special case of preference aggregation, an early
characterization of (a version of) the Condorcet-Kemeny rule, regarded as a 
\textit{multi-aggregation rule} for \textit{linear orders} in a \textit{%
variable population} setting is due to Young, Levenglick (1978). Indeed,
Young and Levenglick prove that the Condorcet-Kemeny multi-aggregation rule
is in fact the unique function $f:\dbigcup \limits_{n\in \mathbb{N}}(%
\mathcal{L}_{A})^{n}\longrightarrow \mathcal{P}(\mathcal{L}_{A})\setminus
\left \{ \emptyset \right \} $ that satisfies the following three
properties: \textit{neutrality}, a version of the \textit{Condorcet principle%
}, and `\textit{consistency}' across committees/electorates (i.e. for any
pair of profiles $R_{N},R_{M}$ such that $N\cap M=\emptyset $, if $%
f(R_{N})\cap f(R_{M})\neq \emptyset $ then $f((R_{N},R_{M}))=f(R_{N})\cap
f(R_{M})$).\footnote{%
In subsequent work (see e.g. Young (1995)), it is emphasized that at any
profile $R_{N}$ of linear orders on a finite $A$ the linear orders selected
by the Condorcet-Kemeny rules can also be regarded as the \textit{maximum
likelihood }rankings according to the evidence provided by $R_{N}$. It
should be noted that Young's argument is quite general and also applies to
wider classes of preference relations on $A$ including the set of all total
preorders $\mathcal{R}_{A}^{T}$ and the set of all reflexive relations $%
\mathcal{B}_{A}^{r}$.
\par
{}}

In a similar vein, but in a much more general setting and building partly on
Barth\'{e}lemy, Janowitz (1991), McMorris, Mulder, Powers (2000) establishes
a further elegant characterization of the \textit{median function }as a 
\textit{multi-aggregation rule\ }$f:\dbigcup \limits_{n\in \mathbb{N}%
}X^{n}\longrightarrow (\mathcal{P}(X)\setminus \left \{ \varnothing
\right
\} ) $ for a median meet-semilattice $(X,\leqslant )$ in a \textit{%
variable population} framework, using suitably generalized counterparts of a
weaker version of Condorcet principle (labelled as $\frac{1}{2}$\textit{%
-Condorcet} \textit{property}) and \textit{`consistency' }across
populations/electorates as presented above, and a very mild \textit{%
`faithfulness'} condition simply requiring $f((x))=\left \{ x\right \} $ for
each $x$ in $X$.

The present paper obviously owes much to that most remarkable body of
literature. Notice, however, that the contributions mentioned above \textit{%
do not consider at all strategy-proofness properties of aggregation rules }%
(or, for that matter, nonmanipulability properties of \textit{any} sort).

Some previous joint works of Nehring and Puppe (see in particular Nehring,
Puppe (2007),(2010)) have also several significant connections to the
present contribution. To be sure, Nehring, Puppe (2007) is mainly concerned
with \textit{strategy-proof social choice functions} as defined on profiles
of total preorders on finite sets. Conversely, Nehring, Puppe (2010)\ is
focussed on an `abstract' class of Arrowian aggregation problems including
preference aggregation and, more specifically, social welfare functions, but
it does \textit{not }address issues concerning their strategy-proofness
properties. However, social choice functions with the tops-only property%
\textit{\footnote{%
A social choice function for $(N,A)$ is a function $f:\mathcal{D}%
^{N}\rightarrow A$ where $\mathcal{D}\subseteq \mathcal{R}_{A}^{T}$: it
satisfies the \textit{tops-only property }if $f(R_{N})=f(R_{N}^{\prime })$
whenever $t(R_{i})=t(R_{i}^{\prime })$ for each $i\in N$, and $%
|t(R_{i})|=|t(R_{i}^{\prime })|=1$ for all $i\in N$ (with $t(R_{i}):=\left \{
x\in A:xR_{i}y\text{ for all }y\in A\right \} $).}} may be regarded as
aggregation rules endowed with a specific domain of total preorders, and the
class of Arrowian aggregation rules considered in Nehring, Puppe (2010) does
include the case of preference aggregation rules in finite median
semilattices. Specifically, Nehring and Puppe attach to any \textit{finite }%
outcome space a certain finite hypergraph $\mathbb{H}=(X,\mathcal{H})$
denoted as \textit{property space, }where the set $\mathcal{H}\subseteq 
\mathcal{P}(X)\smallsetminus \left \{ \varnothing \right \} $ of (nonempty)
hyperedges or \textit{properties }of outcomes/states in $X$\textit{\ }is 
\textit{complementation-closed }and \textit{separating }(namely $X\setminus
H\in \mathcal{H}$ whenever $H\in \mathcal{H}$, and for every two \textit{%
distinct }$x,y\in X$ there exists $H_{x^{+}y^{-}}\in \mathcal{H}$ such that $%
x\in $ $H_{x^{+}y^{-}}$and $y\notin H_{x^{+}y^{-}}$). Such a property space $%
\mathbb{H}$ models the set of all \textit{binary} properties of outcomes
that are regarded as relevant for the decision problem at hand. Then, a
betweenness relation $B_{\mathbb{H}}\subseteq X^{3}$ is introduced by
stipulating that $B_{\mathbb{H}}(x,y,z)$ holds precisely when $y$ satisfies
all the properties shared by $x$ and $z$.\footnote{%
In particular, a nonempty subset $Y\subseteq X$ is said to be \textit{convex 
} for $\mathbb{H=}(X,\mathcal{H})$ if for every $x,y\in Y$ and $z\in X$, if $%
B_{\mathbb{H}}(x,z,y)$ then $z\in Y$, and \textit{prime }(or a \textit{%
halfspace}) for $\mathbb{H}$ if both $Y$ and $X\setminus Y$ are convex for $%
\mathbb{H}$ and $\left \{ Y,X\setminus Y\right \} \subseteq \mathcal{H}$.}
Moreover, \textit{single-peaked }preference domains on $X$ can be defined
relying on $B_{\mathbb{H}}$. In particular, $B_{\mathbb{H}}$ is said to be 
\textit{median }if for every $x,y,z\in X$ there exists a unique $m_{xyz}\in
X $ such that $B_{\mathbb{H}}(x,m_{xyz},y)$, $B_{\mathbb{H}}(x,m_{xyz},z)$,
and $B_{\mathbb{H}}(y,m_{xyz},z)$ hold.\footnote{%
In that case, $\mathbb{H}$ is said to be a \textit{median property space, }$%
(X,m^{\mathbb{H}})$ (where $m^{\mathbb{H}}:X^{3}\rightarrow X$ is defined by
the rule $m^{\mathbb{H}}(x,y,z)=m_{xyz}$ for every $x,y,z\in X$) is a 
\textit{median algebra}, and for each $u\in X$ the pair $(X,\vee _{u})$
(where $x\vee _{u}y=y$ iff $m^{\mathbb{H}}(x,y,u)$ $=y$ for some $u\in X$)
is a \textit{median join-semilattice} having $u$ as its maximum.} The
following key results are obtained by Nehring and Puppe: $(i)$ the class of
all idempotent social choice functions which are strategy-proof on the
domain of single-peaked preferences thus defined are characterized in terms
of voting by binary issues through a certain combinatorial property\footnote{%
The combinatorial property mentioned in the text is the so-called
`Intersection Property' which requires that for every minimally inconsistent
set of properties, it must be the case that any selection of winning
coalitions for the corresponding binary issues has a non-empty intersection.}
of the families of winning coalitions for the relevant issues and $(ii)$ 
\textit{if the property space is median }then\textit{\ }such combinatorial
property is definitely met, and consequently non-dictatorial neutral and/or
anonymous strategy-proofs aggregation rules including the majority voting
rule are available (Nehring, Puppe (2007), Theorems 3 and 4). Furthermore,
in Nehring, Puppe (2010) the very same theoretical framework is deployed to
analyze preference aggregation and social welfare functions. In particular,
several `classical' properties for social welfare conditions including the
Arrowian Independence of Irrelevant Alternatives (IIA) property can be
reformulated in more general terms which depend on the specification of the
relevant property space\footnote{%
Specifically, given a property space $\mathbb{H}=(\mathcal{R}_{A}^{T},%
\mathcal{H})$, such a generalized $IIA$ for a social welfare function $f:(%
\mathcal{R}_{A}^{T})^{N}\longrightarrow \mathcal{R}_{A}^{T}$ can be defined
as follows: for every $H\in \mathcal{H}$ and $R_{N},R_{N}^{\prime }\in (%
\mathcal{R}_{A}^{T})^{N}$ such that $\left \{ i\in N:R_{i}\in H\right \}
=\left \{ i\in N:R_{i}^{\prime }\in H\right \} $, if $f(R_{N})\in H$ then $%
f(R_{N}^{^{\prime }})\in H$ as well. Of course the original Arrowian version
of such a generalized IIA is obtained by taking $\mathcal{H}:=\left \{
H_{(x,y)}:x,y\in A\right \} $ with $H_{(x,y)}:=\left \{ R\in \mathcal{R}%
_{A}^{T}:xRy\right \} $.}: it follows that several versions of IIA can be
considered. But then, as it turns out, $(iii)$ the \textit{versions of IIA
attached to median property spaces} are consistent with anonymous and
neutral social welfare functions including those induced by majority-based
aggregation rules (Nehring, Puppe (2010), Theorem 4). Interestingly, a
specific example of a median property space for the set of all total
preorders is also provided, namely the one whose issues consist in asking
for each non-empty $Y\subseteq X$ and any total preorder $R$ whether or not $%
Y$ is a \textit{lower contour }of $R$ with respect some outcome $x\in X$.%
\footnote{%
Thus, the property space suggested here is $\mathbb{H}^{\circ }:=\left \{ 
\mathcal{R}_{A}^{T},\mathcal{H}^{\circ }\right \} $, where $\mathcal{H}%
^{\circ }:=\left \{ H_{L}:\varnothing \neq L\subseteq A\right \} $ and $%
H_{L}:=\left \{ 
\begin{array}{c}
R\in \mathcal{R}_{A}^{T}:\text{for some }x\in A \\ 
L=\left \{ y\in A:xRy\right \}%
\end{array}%
\right \} $.}

The overlappings between such results and those presented here are
remarkable, along with some sharp differences which make them mutually
independent. Since finite median semilattices are indeed an example of a
finite median algebra\footnote{%
Specifically, a finite median join-semilattice can be regarded as a generic
instance of a finite median algebra with one of its elements singled out
(that point corresponds to the top element of the semilattice).}, and are
consequently representable as median property spaces\footnote{%
For instance, it is \textit{always} possible represent a (finite) median
algebra as a (finite) property space by taking as properties its \textit{%
prime} sets as defined through its median betweenness (see e.g. Bandelt, Hedl%
\'{\i}kov\'{a} (1983), Theorem 1.5, and note 19 above for a definition of
prime sets). It is important to observe that in general a finite median
algebra or ternary space admits of several representations by distinct
median property spaces (and other non-median as well). By contrast, a
ternary (finite) algebra or space which is not median can only be
represented by (finite) property spaces which are \textit{not} median.}, all
of the Nehring and Puppe's results mentioned above \textit{do apply }to
finite median semilattices as a special case. Notice however that our
results provide a characterization of strategy-proof aggregation rules for
finite median join-semilattices which is both \textit{more explicit} (it
includes a polynomial description of some such rules) and \textit{more
comprehensive }(it is a complete characterization in that it is not limited
to sovereign and idempotent rules). Concerning alternative representations
of the semilattice of total preorders on a finite set, our treatment can
also be translated in terms of a \textit{median} property space, though a 
\textit{different one} from that considered by Nehring and Puppe. In fact,
in our case the set of relevant properties corresponds to the
meet-irreducibles of that semilattice, namely the total preorders having
just \textit{two} indifference classes, or equivalently a binary ordered
classification of outcomes as \textit{good }or \textit{bad}, respectively.
Accordingly, the collection of relevant issues consist in asking, for each
binary good/bad classification of outcomes and any total preorder $R$,
whether the latter is consistent with the given binary classification.%
\footnote{%
Thus, the appropriate version of generalized IIA in our own model is $%
\mathbb{H}^{\ast }:=(\mathcal{R}_{A}^{T},\mathcal{H}^{\ast })$ with $%
\mathcal{H}^{\ast }:=\left \{ 
\begin{array}{c}
H_{A_{1}A_{2}}:A_{1}\neq \varnothing \neq A_{2} \\ 
A_{1}\cap A_{2}=\varnothing \text{, }A_{1}\cup A_{2}=A%
\end{array}%
\right \} $%
\par
$H_{A_{1},A_{2}}:=\left \{ R\in \mathcal{R}_{A}^{T}:R\subseteq
R_{A_{1}A_{2}}\right \} \, \ $and $R_{A_{1}A_{2}}$ is of course the
two-indifference-class total preorders having $A_{1}$ and $A_{2}$ as top and
bottom indifference classes, respectively. Notice that both $\mathbb{H}%
^{\ast }$ and Nehring-Puppe's $\mathbb{H}^{\circ }$as previously defined
(see footnote 22\ above) are \textit{median }property spaces, while the
original Arrowian $\mathbb{H}$ is not.} Summing up, while Nehring and
Puppe's contributions do not address explicitly strategy-proofness issues
for social welfare functions, their approach via property spaces provides an
additional and helpful perspective to appreciate the content and
significance of the results of the present work.

The issue of \textit{strategy-proofness} \textit{for preference aggregation
rules} has been indeed \textit{explicitly }addressed in the previous
literature, but never -to the best of the authors' knowledge- with respect
to the `full' domain of \textit{all} total preorders on a set. Under the
heading `social welfare functions', Bossert and Storcken (1992) study in
fact aggregation rules for \textit{linear} orders on a finite set (hence
what we refer to as \textit{strict social welfare functions}) and their 
\textit{coalitional strategy-proofness }properties with respect to topped
metric total preference preorders (on the set of linear orders) as induced
by a suitably `renormalized' version of the Kemeny distance to be further
discussed below. They prove an impossibility theorem for those coalitionally
strategy-proof\textit{\ }and sovereign \textit{strict }social welfare
functions\textit{\ }that also satisfy a certain condition of independence
from extrema.

Working within a \textit{variable population} framework, Bossert and
Sprumont (2014) offer several possibility results concerning \textit{%
restricted} strategy-proof aggregation rules (mapping profiles of \textit{%
linear orders} on a finite set $A$ into \textit{total} \textit{preorders }on 
$A$)\textit{\ }which are strategy-proof on the domain of topped preferences
(on the set of total preorders) that are single-peaked with respect to the
median betweenness of the distributive lattice of reflexive binary relations
on $A$ (which amounts to an outcome space $\mathcal{B}_{A}^{r}$which is far
more comprehensive than the `small' domain-base $\mathcal{L}_{A}$ or even
the larger codomain $\mathcal{R}_{A}^{T}$ of the aggregation rule\footnote{%
Thus, in a sense, the median-induced betweenness relation under
consideration (and the resulting single-peakedness property) is \textit{not}
the one `naturally' dictated by the codomain $\mathcal{R}_{A}^{T}$ (let
alone the strictly smaller domain-base $\mathcal{L}_{A}$) of the aggregation
rule.}). That paper identifies some (variable-population) strategy-proof 
\textit{restricted }aggregation rules on $\mathcal{R}_{A}^{T}$ including
(strict) Condorcet-Kemeny rules, a class of variable-population counterparts
of our \textit{monotonic retracts of the majority relation }as introduced
above, and a family of rules denoted as \textit{status-quo rules }that are
related to the class of \textit{outcome-biased rules }mentioned above as one
family of examples covered by Proposition 2\textit{.} An (implicit)
characterization of such monotonic majority-retracts is also provided, and
the family of status-quo rules is explicitly characterized (but the strict
Condorcet-Kemeny rules are not). Thus, the present paper provides \textit{%
extensions of the fixed-population counterparts} of such strategy-proof 
\textit{restricted} aggregation rules to strategy-proof \textit{exact}
aggregation rules for \textit{total preorders}, and a \textit{unified }joint
characterization of all of them (see in particular Corollary 1, Propositions
2 and Proposition 3 above), as well as a specific characterization of
generalized Condorcet-Kemeny rules for the case of \textit{odd-dimensional }%
domains\textit{.} Notice, however, that the notion of betweenness underlying
the relevant notion of single-peakedness for `preferences on preferences'
that guarantees the strategy-proofness of such \textit{exact} rules in the
present paper is in fact a most `natural' one, namely the median betweenness
which is characteristic of their domain-base $\mathcal{R}_{A}^{T}$ (but is 
\textit{not} well-defined on its subdomain $\mathcal{L}_{A}$).

The issue of strategy-proof aggregation in \textit{arbitrary }(possibly
infinite) join-semilattices is addressed in Bonifacio, Mass\'{o} (2020)
within a \textit{fixed population }framework. To be sure, that work focuses
in fact on so-called `simple rules', namely anonymous and
unanimity-respecting \textit{social choice functions} with the \textit{%
`tops-only'-property}.\textit{\footnote{%
See footnote 31 above.}} But then, such `simple rules' are essentially
equivalent to \textit{anonymous} and \textit{idempotent} aggregation rules
which are endowed with an \textit{explicitly pre-defined domain of
preference profiles of total preorders}. In particular, the Authors consider
a restriction on total topped preference preorders they denote (join-)`%
\textit{semilattice-single-peakedness}'\footnote{%
The notion of semilattice-single-peakedness (SSP) for total preorders on a
join-semilattice $(X,\leqslant )$ was first introduced in Chatterji, Mass%
\'{o} (2018). A total preorder $R$ on $X$ is SSP in $(X,\leqslant )$ if and
only if : (i) $R$ has a unique maximum element $x^{\ast }$ in $X$; (ii) $yRz$
for each $y,z\in X$ such that $x^{\ast }\leqslant y\leqslant z$; (iii) $%
(x^{\ast }\vee u)Ru$ for each $u\in X$ such that $x\nleqslant u$.} which
results in a maximal domain that is consistent with the existence of
strategy-proof `simple rules'. Then, they proceed to characterize \textit{%
the subclass} of anonymous and idempotent strategy-proof aggregation rules,
establishing that they are precisely the \textit{`supremum' rule }$f^{\vee }$
and a family of `\textit{generalized quota-supremum}' rules.\footnote{%
The \textit{`supremum'} (or \textit{join }$n$-\textit{projection ) }rule $%
f^{\vee }$ for $(N,X)$ is defined as follows: $f^{\vee }(x_{N}):=\vee _{i\in
N}x_{i}.$ A \textit{generalized quota-supremum} rule returns a certain
prefixed alternative $x^{\ast }$ if $x^{\ast }$ reaches a prespecified
quota, and $\vee _{i\in N}x_{i}$ otherwise.} It should also be noticed that
such a comparatively weak notion of semilattice-single-peakedness is
admittedly consistent with the notion of single-peakedness induced by
metric-betweenness according to the shortest-path-metric on the covering
graph of the semilattice.\footnote{%
That is so because (if the join-semilattice $(X,\leqslant )$ is \textit{%
discrete} i.e. it has no bounded infinite chain) for any pair of elements $%
x,y$ $\in X$ which are \textit{not} $\leqslant $-comparable the join $x\vee
y $ must lie on a shortest path from $x$ to $y$ of the covering graph of the
semilattice.}

However, semilattice-single-peakedness is clearly bound to relinquish any
connection not only to a median-induced betweenness if the relevant
semilattice is \textit{not} median, but also to the most \textit{natural}
rank-based metric betweenness if the semilattice also happens to be \textit{%
not} \textit{even graded}.\textit{\footnote{%
Lattices (hence, of course, semilattices) which are not graded are quite
common: in the present context, the lattice of \textit{partial preorders} is
perhaps the most obvious example (see e.g. Barbut, Monjardet (1970)).}}
Therefore, in the latter case there is no natural metric to ground the claim
that a certain type of single-peakedness describes a sort of `preferences on
preferences' that are induced in a `natural' -hence plausibly shared- way by
the \textit{actual} basic preferences of agents.

It is also worth mentioning here that, in any case, strategy-proofness only
concerns \textit{strategic manipulation} of a preference-aggregation
process, namely manipulation of the outcome of a certain game by means of an 
\textit{appropriate choice of strategy} in the available strategy-set(s). In
other terms, a \textit{given game} is implicitly being taken for granted,
including of course the \textit{population} of its players and the set of
its possible alternative outcomes, or its \textit{agenda. }But then, \textit{%
manipulation of the agenda} (or, for that matter, of the relevant population
of players itself) can also be considered: notice, however, that from a
game-theoretic perspective, that is a kind of \textit{structural (as opposed
to strategic) manipulation} since it amounts to a change of the \textit{game
itself. }

Such a broader perspective on manipulation issues in preference-aggregation
is apparent (if mostly implicit) in Sato (2015).\footnote{%
This is also, arguably, Arrow's own perspective on manipulation issues (see
Arrow (1963)), except that he overtly renounces to address \textit{strategic 
}manipulation issues, while acknowledging their substantial import (see e.g.
Arrow (1963), chpt. 1). By contrast, \textit{agenda-manipulation} issues
play a key role in the arguments offered by Arrow to support his own
proposal of the Independence of Irrelevant Alternatives (IIA) condition for
social welfare functions (a more detailed discussion of the relationship of
IIA to agenda manipulation will be provided elsewhere).} Indeed, Sato's
contribution relies on a \textit{fixed population }framework and is mainly
focussed on \textit{strict social welfare functions }as defined on some 
\textit{connected }domain of linear orders over a finite set $A$.\footnote{%
A \textit{connected }domain of linear orders over $A$ is a set $\mathcal{%
D\subseteq L}_{A}$ such that for any $R,R^{\prime }\in \mathcal{D}$ there
exists a finite family $\left \{ R_{1},...,R_{k}\right \} \subseteq \mathcal{%
D} $ such that (i) $R_{1}=R$; (ii) $R_{k}=R^{\prime }$; (iii) for every $%
i=1,...,k-1$, $R_{i}$ and $R_{i+1}$ can be mutually obtained by reversing
the respective ranks of two adjacent (or consecutive) elements of $A$ that
are `adjacent' (i.e. consecutive) according to the other. Thus, a (strict)
social welfare function on a connected domain is a function $f:\mathcal{D}%
^{N}\longrightarrow \mathcal{L}_{A}$ (clearly, it is also \textit{restricted 
}for $(N,\mathcal{L}_{A})$ if $\mathcal{D\neq L}_{A}$). Observe that $%
\mathcal{L}_{A}$ itself is of course a connected domain.} However, it also
considers the family of \textit{social choice functions} which are induced
by any such strict social welfare function on the subsets of $A$ through
maximization -at each preference profile- of the `social' linear order
selected at that profile (as restricted to the relevant subset of $A$). In
that connection, \textit{four} notions of \textit{nonmanipulability }for
strict social welfare functions are considered, with the primary aim to
address issues of \textit{strategic} manipulation.\footnote{%
One of them is akin to the notion of strategy-proofness for aggregation
rules proposed by Bossert, Sprumont (2014) as discussed above, and another
one relies on the `renormalized' Kemeny distance for linear orders. By
contrast, the last two nonmanipulability notions invoke the induced
maximizing choices on $A$, and on its subsets, respectively (and are also
most suitable to address certain agenda-manipulation issues).} Then, relying
on a \textit{renormalized and `contracted' }version $\widehat{d}_{K}$ 
\footnote{%
Namely, the `halved' Kemeny distance for linear orders. That is essentially
the distance between rankings due to Kendall, given by the minimal number of
transpositions of adjacents elements that is necessary to obtain one linear
order starting from another one (see e.g. Kendall (1955)).} of the \textit{%
Kemeny distance } as defined previously, Sato introduces a `continuity-type'
condition for strict social welfare functions called \textit{Bounded Response%
}. A strict social welfare function $f$ satisfies Bounded Response if $%
\widehat{d}_{K}(f(R_{N}),f(R_{N}^{\prime }))\leq 1$ whenever two preference
profiles $R_{N},R_{N}^{\prime }$ are the same except for the preference of a
single agent $i$, and $R_{i}$ and $R_{i}^{\prime }$ are \textit{adjacent}
(i.e. $R_{i}^{\prime }$ is obtained from $R_{i}$ by permuting the $R_{i}$%
-ranks of \textit{a single} pair of alternatives with \textit{consecutive} $%
R_{i}$-ranks).\footnote{%
It is worth recalling here that $1$ is the minimum positive value of \ both $%
d_{K}$ and $\widehat{d}_{K}$.} In a similar vein, a very mild \textit{%
Adjacency-Restricted Monotonicity }condition for strict social welfare
functions is considered. The main result established by Sato (2015) is 
\textit{the equivalence} of the following statements for a strict social
welfare function $f$ on a connected domain of linear orders on $A$: (1) $f$
satisfies Bounded Response and \textit{at least one} of the four
nonmanipulability conditions mentioned above; (2) $f$ satisfies Bounded
Response and \textit{each one }of the four nonmanipulability conditions
mentioned above; (3) $f$ satisfies Adjacency-Restricted Monotonicity and the
Arrowian \textit{Independence of Irrelevant Alternatives (IIA)} condition.%
\textit{\ }

As a corollary of that result (and of arguments from standard proofs of the
Arrowian `impossibility' theorem for strict social welfare functions) a new
characterization of \textit{dictatorial} strict social welfare functions in
terms of Bounded Response, one of the four equivalent nonmanipulability
conditions mentioned above, and Sovereignty (or Ontoness)\footnote{%
A strict social welfare function $f$ is \textit{sovereign }if for any $L\in 
\mathcal{L}_{A}$ there exists $R_{N}\in \mathcal{L}_{A}^{N}$ such that $%
f(R_{N})=L$.} is established. Furthermore, the set $\mathcal{D}^{sp(Q)}$of
linear orders on $A$ which are \textit{single-peaked }with respect to some 
\textit{fixed linear order }$Q$ on $A$ can also be shown to be a \textit{%
connected }domain, and the strict social welfare function $f^{wmaj}$induced
by the method of `(weak) majority decision'\footnote{%
Namely, for any $R_{N}\in (\mathcal{L}_{A})^{N}$, and $x,y\in A$, $%
xf^{wmaj}(R_{N})y$ if and only if $|N_{x}(R_{N})|\geq |N_{y}(R_{N})|$.}
clearly satisfies both Adjacency-Restricted Monotonicity and IIA. Hence, it
immediately follows that $\widetilde{f}^{wmaj}:\mathcal{D}%
^{sp(Q)}\rightarrow \mathcal{L}_{A}$ is a (restricted) strict social welfare
function which satisfies both Bounded Response and all of the four
nonmanipulability conditions mentioned above (hence, in particular, the
strategy-proofness properties implied by the first two conditions from that
list).

Thus, at least when applied to \textit{strict} social welfare functions, the
combination of Bounded Response and standard nonmanipulability conditions
(including, more specifically, strategy-proofness requirements) tends
apparently to reproduce a well-known pattern. Namely, `impossibility'
theorems on the \textit{full} domain of linear orders, and some
`possibility' results on suitably \textit{restricted domains }of linear
orders (to the effect that e.g. several versions of the simple majority rule
provide well-defined and strategy-proof \textit{restricted strict social
welfare functions }on certain single-peaked domains of linear orders%
\footnote{%
The significant body of literature devoted to the elaboration of such two
related themes is extensively reviewed in the fourth chapter of Gaertner
(2001).}).

By contrast, the existence issue for strategy-proof social welfare functions
as aggregation rules on the \textit{full domain of total preorders or even
larger sets of reflexive and possibly nontransitive binary relations} on a
finite set has never been addressed explicitly in previously published work,
as mentioned above.

\end{document}